\documentclass[journal,twocolumn]{IEEEtran}%
\usepackage{cite}
\usepackage{amsmath,amssymb,amsfonts}
\usepackage{algorithm, algpseudocode} 
\usepackage{subcaption} 
\usepackage{graphicx}
\usepackage{textcomp}
\usepackage{xcolor}
\usepackage{xfrac}
\usepackage{amsmath}
\usepackage{aliascnt}
\usepackage{amsfonts}
\usepackage{amssymb}%
\usepackage{todonotes}
\usepackage{enumerate}
\usepackage{comment}
\usepackage{tikz}
\usetikzlibrary{arrows.meta, positioning, fit, calc}

\providecommand{\U}[1]{\protect\rule{.1in}{.1in}}
\def\BibTeX{{\rm B\kern-.05em{\sc i\kern-.025em b}\kern-.08em
T\kern-.1667em\lower.7ex\hbox{E}\kern-.125emX}}

\usepackage[utf8]{inputenc}
\usepackage{cite}
\usepackage{hyperref}
\usepackage{amsmath,amssymb,amsfonts,amsthm}
\usepackage{graphicx}
\usepackage{textcomp}
\usepackage{cancel}
\usepackage{xcolor}
\def\BibTeX{{\rm B\kern-.05em{\sc i\kern-.025em b}\kern-.08em
    T\kern-.1667em\lower.7ex\hbox{E}\kern-.125emX}}
\usepackage{xfrac}
\providecommand{\U}[1]{\protect\rule{.1in}{.1in}}

\newcommand{\C}{\mathbb{C}}

\newcommand{\R}{\mathbb{R}}

\newcommand{\Diag}{\operatorname{Diag}} 
\newcommand{\Cov}{\operatorname{Cov}} 

\newcommand{\spann}{\mathrm{span}}

\makeatletter
\newcommand{\Autoref}[1]{%
  \begingroup
  \def\sectionautorefname{Section}%
  \def\subsectionautorefname{Subsection}%
  \def\subsubsectionautorefname{Subsection}%
  \def\figureautorefname{Figure}%
  \def\tableautorefname{Table}%
  \def\algorithmautorefname{Algorithm}%
  \autoref{#1}%
  \endgroup
}
\makeatother

\usepackage{tikz}
\usetikzlibrary{
    positioning,
    fit,
    arrows.meta,
    shapes.geometric,
    shapes.misc,
    calc
}

\newcommand{\calD}{\mathcal{D}}

\newcommand{\calF}{\mathcal{F}}

\newcommand{\calH}{\mathcal{H}}

\newcommand{\calL}{\mathcal{L}}

\newcommand{\calN}{\mathcal{N}}

\newcommand{\ralpha}{\mathrm{r}\alpha}
\newcommand{\ialpha}{\mathrm{i}\alpha}

\newcommand{\abs}[1]{\left\vert #1 \right\vert}

\newcommand{\sse}{\subseteq}
\newcommand{\sprod}[1]{\left\langle #1 \right\rangle}

\usepackage{tikz}
\usetikzlibrary{cd}

\usepackage{dsfont}

\DeclareMathOperator{\supp}{supp}

\usepackage{dsfont}

\renewcommand{\Re}{\mathrm{Re}}

\newtheorem{lem}{Lemma}

\newaliascnt{prop}{lem}
\newtheorem{prop}[prop]{Proposition}
\aliascntresetthe{prop}

\newaliascnt{theo}{lem}
\newtheorem{theo}[theo]{Theorem}
\aliascntresetthe{theo}

\newaliascnt{cor}{lem}

\aliascntresetthe{cor}

\newaliascnt{assumption}{lem}

\aliascntresetthe{assumption}

\newaliascnt{defi}{lem}
\newtheorem{defi}[defi]{Definition}
\aliascntresetthe{defi}

\newaliascnt{rem}{lem}
\newtheorem{rem}[rem]{Remark}
\aliascntresetthe{rem}

\newaliascnt{res}{lem}

\aliascntresetthe{res}

\newaliascnt{problem}{lem}

\aliascntresetthe{problem}

\newaliascnt{ass}{lem}
\newtheorem{ass}[ass]{Attack model}
\aliascntresetthe{ass}

\definecolor{gerhard}{rgb}{1,0,0}

\begin{document}
\title{Bilinear Compressive Security with Finitely Many Samples}

\author{Axel Flinth,
Semira Einsele,
Matilda Kylberg,
Hubert Orlicki,
Gerhard Wunder.
\thanks{A. Flinth and M. Kylberg are with Umeå University, Sweden}
\thanks{H. Orlicki is with Cracow University of Technology, Poland}
\thanks{S. Einsele and G. Wunder are with Freie Universität Berlin, Germany}
\thanks{Part of this work was presented at the IEEE International Conference on Communications (ICC) 2026 \cite{flinth2026bilinear}, and in Kylberg's Master's thesis \cite{matilda_thesis}.}
}

\maketitle

\begin{abstract}
    Bilinear Compressive Security (BCS) is a lightweight encryption scheme that combines compressed sensing with a fresh random convolution filter to improve resistance against key-recovery attacks under key reuse. Previous work derived information-theoretic guarantees for BCS by relating key identifiability to phase retrieval under the assumption that an adversary has access to the full ciphertext distribution induced by the random filter. In practice, however, an adversary only observes finitely many encryptions, leaving the security of BCS in the finite-sample regime largely unexplored.

    In this paper, we study key recovery from only finitely many observations. We first develop a covariance-based attack that estimates the linear measurements from repeated encryptions of the same plaintext and subsequently recovers the secret measurement matrix. We analyze the estimation procedure and derive finite-sample guarantees together with corresponding sample-complexity bounds. We further investigate the complete attack experimentally and compare its observed recovery behavior with the previously established information-theoretic identifiability results.

    Our experiments show two qualitatively different recovery regimes on the structure of the measurement matrix. For dense matrices, including Gaussian matrices, the number of measurements required for successful recovery appears to grow linearly with the signal dimension. This differs from the worst-case behavior suggested by information theory. In contrast, for matrices with sparse rows, we observe a quadratic growth consistent with the theoretical bound. 

    These results suggest that the sparsity structure of the measurement matrix plays an important role in the security of BCS and should be taken into account when selecting parameters and designing practical instances of the scheme.
\end{abstract}

\section{Introduction}

Compressed sensing \cite{candes2005decoding,FouRau2013} enables the recovery of a sparse signal $x \in \mathbb{C}^n$ from a reduced number of linear measurements
$$
z = Qx \in \mathbb{C}^m, \qquad m<n,
$$
provided that the measurement matrix $Q \in \mathbb{C}^{m \times n}$ satisfies suitable structural conditions, such as the \emph{restricted isometry property} \cite{candes:rip2008}. 
In addition to reducing the dimensionality of the transmitted data, this observation has naturally motivated lightweight communication schemes in which the measurement matrix $Q$ is shared as a secret key: Alice transmits $Qx$, while Bob reconstructs the original $x$ using a compressed-sensing recovery algorithm. Such constructions are attractive in resource-constrained communication settings because compression and encryption are performed simultaneously.

By letting $Q$ be unknown to an eavesdropper, this communication scheme naturally offers a layer of security  \cite{rachlin2008secrecy,testa2019compressed}. The security of this linear construction, however, deteriorates when the same secret measurement matrix is reused\cite{bianchi2014security,Cambaberi2015}. If an adversary Eve obtains sufficiently many plaintext--ciphertext pairs $(x_k, Qx_k)$, then the secret matrix $Q$ can be recovered by solving a linear system. In particular, $n$ linearly independent plaintexts uniquely determine a measurement matrix with $n$ columns. Consequently, without additional randomization, key reuse provides no information-theoretic protection against sufficiently strong known- or chosen-plaintext attacks.

\subsection{Bilinear compressive security}

Bilinear Compressive Security (BCS), introduced by a subset of this article's authors in \cite{flinth2026bilinear}, was proposed to mitigate this key-reuse vulnerability by introducing fresh randomness into every encryption. Instead of directly transmitting the linear measurement $Qx$, Alice samples a fresh random convolution filter $h \in \mathbb{C}^m$ and sends
$$
y = h *Qx.
$$
The distribution of $h$ is public, while its concrete realization is neither transmitted nor stored. Bob, who knows $Q$ but not $h$, recovers the plaintext by solving a sparse blind deconvolution problem. Under suitable assumptions on the sparsity of the message $x$ and the filter $h$, existing efficient blind deconvolution recovery algorithms such as HiHTP \cite{hiHTP} and sparse power factorization \cite{flinth2022bisparse, lee2017near} allow efficient recovery with high probability.

The random filter hides the linear measurement $Qx$ from the adversary and changes the key-recovery problem. In the mentioned \cite{flinth2026bilinear}, the identifiability of the secret codebook $Q$ and its connection to phase retrieval was studied. In particular, it  established conditions under which different codebooks cannot be distinguished, even if Eve has access to the full ciphertext distributions. However, it leaves the problem whether Eve can recover the codebook when only finitely many ciphertexts are available open.

The use of fresh randomness to prevent key recovery under repeated observations is also familiar from cryptographic constructions based on the Learning Parity with Noise (LPN) problem. In LPN, randomness enters additively through Bernoulli noise, whereas in BCS it enters multiplicatively through the random convolution filter. We mention this analogy only at a conceptual level; the resulting recovery problems are otherwise different.

\subsection{Injectivity and practical recoverability in the  infinite sample regime}
As mentioned, a preliminary version of BCS together with its information-theoretic security analysis was presented by a subset of the authors in  \cite{flinth2026bilinear}. There, the security of BCS was studied in an idealized known- and chosen plaintext setting in which the adversary has access to perfect knowledge of the  ciphertext \emph{distributions} induced by the random filter conditioned on different plaintexts $x_k$. In other words, the eavesdropper is assumed to have access to an infinite amount of samples $h_p*Qx_k$, $p =1, 2, \dots$ for each $x_k$. An injectivity result was shown:  Under the weak assumption that the distribution of $h$ is phase-symmetric, there exists two non-equal codebooks that cannot be distinguished from the $y$-distributions if and only if the vector $q$ cannot be determined from the knowledge of the amplitude measurements $|\langle q,x_k\rangle|$. That is, if and only if the $x_k$ \emph{do not do phase retrieval} \cite{bandeira2014saving}. Using this connection, it was shown that when the messages $x_k$ are $s$-sparse, recovery of $Q$ in general necessitates observation from $M \geq \max(n,n^2/s^2)$, rather than the $n$ needed for ordinary, linear,  compressed sensing.

Crucially, the result shown in \cite{flinth2026bilinear} is an injectivity result: It only gives a condition when there exists two $Q,Q'$ that are confounded by the BCS protocol -- it does not give a condition that implies that every $Q$ is confounded. In this work, we will therefore perform a more thorough numerical study of the relevance of the injectivity result for practical recovery. 

\subsection{A practical attack for the finite sample regime}
Another limitation of the analysis from \cite{flinth2026bilinear} is the assumption that the adversary is assumed to be given perfect knowledge of the  ciphertext \emph{distributions} directly. In practice, they instead only have access to finitely many samples from the distributions of ciphertexts from each plaintext. 
The present paper intents to  bridge this gap, by proposing and analyzing a practical attack 
only needing access finitely many samples of the $y_k$. We will also describe designs of the filter and codebook in which the distribution can be recovered from a single sample, and the finite sampling regime hence provides no extra layer of security.

\subsection{Contributions}
Our main contributions are as follows. 

\begin{itemize}
    \item We develop a practical covariance-based key-recovery attack for BCS in the finite-sample setting. The attack estimates the hidden measurements $Qx_k$ \emph{projectively} (i.e. modulo unit scalars) from repeated ciphertexts, and subsequently reconstructs the secret measurement matrix. We give sample complexity bounds and describe failure cases.

    \item We give a complementary single-sample result showing that, for particular structured choices of the codebook and the filter, a finite sample setting provides no additional level of security compared to the infinite-sample setting.
    
      \item We perform an extended numerical study to quantify the gap between worst-case identifiability and recovery of concrete codebooks. We find that dense codebooks exhibit an approximately linear recovery threshold in the signal dimension, whereas codebooks with sparse rows exhibit quadratic $(n/s)^2$-type behavior suggested by the information-theoretic bound.
    
    \item Finally, we complete the preliminary conference version by providing full proofs of the information-theoretic identifiability results.
\end{itemize}

Our results separate two distinct aspects of key recovery in BCS: the identifiability of the codebook from the projective measurements $\Phi([Q])$, and the statistical difficulty of estimating these measurements from finitely many ciphertext observations. Our experiments further show that practical codebook recovery depends strongly on the structure of the sensing matrix, while the feasibility and sample complexity of our finite-sample attack depend on the distribution of the random filter.

The remainder of this paper is organized as follows. \Autoref{sec:system_model} introduces the system model and adversarial setting.  \Autoref{sec:security_analysis} revisits the information-theoretic identifiability analysis, and \Autoref{sec:proof} provides the complete proof of our guarantees. \Autoref{sec:finite_samples} presents the finite-sample recovery attack together with its theoretical analysis. Experimental results are reported in \Autoref{sec:numerics}, followed by a discussion and conclusion.

\section{System model} \label{sec:system_model}
We begin by giving a formal definition of the BCS scheme, and subsequently discuss its correctness, and the adversarial observation model.

\subsection{Notation}
Vectors and matrices are defined over $\mathbb{C}$ unless stated otherwise. The Euclidean and Frobenius norm are denoted by $\|\cdot\|_2$ and $\|\cdot\|_{\mathrm F}$, respectively. The support and sparsity of a vector $v$ are denoted by
$$
\supp(v)=\{i:v_i\neq 0\},
    \qquad
    \|v\|_0=|\supp(v)|.
$$
A vector is called $s$-sparse when $\|v\|_0\leq s$.

For a matrix $A$, we write $A_{i,:}$ and $A_{:,j}$ for its $i$th row and $j$th column, respectively. The circular convolution of two vectors is denoted $*$, and the discrete Fourier transform by $\mathcal{F}.$ The Fourier transform of a vector $x$ is written as $\widehat{x} = \mathcal{F}x$. For convenience, we will define the Fourier transformation of a matrix $A$ differently:  $(\mathcal{F}A)\widehat{x}= \widehat{Ax}$ , $x\in \C^n$. 

Let $\mathbb S^1 := \{\theta\in\mathbb C : |\theta|=1\}$ denote the unit circle in $\C$. For $x,y \in \C^d$, let 
$$
x \sim y \quad \Longleftrightarrow \quad x=\theta y \text{ for some }\theta\in\mathbb S^1.
$$
We denote the corresponding quotient space
$$
\mathbb C^d/\mathbb S^1,
$$
and write $[x]\in\mathbb{C}^d/\mathbb{S}^1$ for the equivalence class of $x$ modulo a global phase.

\subsection{Bilinear Compressive Security}
The Bilinear Compressive Security (BCS) scheme, introduced in \cite{flinth2026bilinear}, combines compressed sensing with sparse blind deconvolution to provide lightweight encryption that remains resistant to key-recovery attacks under key reuse. The scheme is parameterized by
\begin{itemize}
    \item[(i)] a secret measurement matrix (\emph{codebook})
    $$
    Q \in \mathbb{C}^{m \times n},
    $$ shared between Alice and Bob, and
    \item[(ii)] a public probability distribution
    $
    \mathcal{D}
    $ of convolution filters
    $$
    h \in \mathbb{C}^m.
    $$
\end{itemize}

The concrete realization of the filter is generated independently for every encryption and is never transmitted nor stored. The BCS scheme consists of the following algorithms.

\paragraph{Key generation}
Alice and Bob agree on a secret measurement matrix
$$
Q\in\mathbb{C}^{m\times n}.
$$
The distribution $\mathcal D$ from which random filters are drawn is publicly known.

\paragraph{Encryption}
To encrypt a plaintext
$
x\in\mathbb{C}^n$,
Alice first computes the compressed representation
$$w=Qx.$$

She then samples an independent random filter
$
h\sim\mathcal D
$ and
forms the ciphertext
$$
y=h*w = h*Qx.
$$

The filter realization is discarded immediately after encryption, so encrypting the same plaintext twice produces different ciphertexts with overwhelming probability.

\paragraph{Decryption}
Bob receives the ciphertext $y$ and knows the sensing matrix $Q$, but not the realization of the random filter. Recovering the plaintext therefore amounts to solving the blind deconvolution problem
$$
y=h*Qx.
$$

We discuss the corresponding recovery guarantees in the following subsection.

\subsection{Correctness}
The correctness of the BCS scheme follows from existing recovery guarantees for sparse blind deconvolution. Recall that Bob observes
$$
    y = h * Qx
$$
and knows $Q$, but not the realization of the random filter $h$.

As is standard in blind deconvolution, $(h,x)$ is identifiable only up to a nonzero scaling factor, since
$$
    (\alpha h) * Q(\alpha^{-1}x) = h * Qx
$$
for every $\alpha \in \mathbb{C}\setminus\{0\}$. Under suitable assumptions on $Q$, $x$, and $h$, this is the only ambiguity. 

Several efficient algorithms have been proposed for sparse blind deconvolution, including convex optimization approaches \cite{convex_blind_deconv}, alternating minimization \cite{alternating}, Sparse Power Factorization \cite{lee2017near}, and HiHTP \cite{hiHTP,flinth2022bisparse}.

The following theorem summarizes the recovery guarantees most relevant to BCS.

\begin{theo}[\cite{flinth2022bisparse,lee2017near}]
Let $Q\in\mathbb C^{m\times n}$ be a random matrix with independent Gaussian entries, and suppose that the plaintext $x$ is $s$-sparse and the filter $h$ is $\sigma$-sparse. Then there exists a constant
$C>0$ such that if
$$
m\ge
C\,s\sigma\,\mathrm{polylog}(m,n),
$$
algorithms such as HiHTP and Sparse Power Factorization recover the pair $(h,x)$, up to the inherent scaling ambiguity, with high probability over the draw of $Q$.
If, in addition, both $h$ and $x$ satisfy suitable spectral flatness conditions, Sparse Power Factorization requires only
$$ m\ge C\,(s+\sigma)\,\mathrm{polylog}(m,n)$$
measurements.
\end{theo}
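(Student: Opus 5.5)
This theorem is imported from \cite{flinth2022bisparse,lee2017near}, so the plan is to reduce the BCS decryption problem to the sparse blind deconvolution models treated there, rather than to rederive their guarantees from scratch. The first step is to rewrite the measurement as a linear map acting on a rank-one matrix. Let $C_j$ be the circulant shift and let $q_j = Q_{:,j}$. Then
$$y = h * Qx = \sum_{j} x_j\, (h * q_j) = \mathcal{A}(h x^{T}),$$
where $\mathcal{A}:\C^{m\times n}\to\C^m$ is linear and is determined by $Q$. In the Fourier domain, $\widehat{y} = \widehat{h}\odot \widehat{Qx}$, so each measurement is a bilinear form $\langle b_\ell, h\rangle\langle c_\ell, x\rangle$. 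Here $b_\ell$ is a Fourier row and $c_\ell$ is a row of $\mathcal{F}Q$, which is again Gaussian by rotation invariance. This is exactly the subspace/sparse blind deconvolution model used in those works. Recovering $(h,x)$ up to scaling is then equivalent to recovering the rank-one, $(\sigma,s)$-bisparse matrix $X_0 = hx^T$.

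The second step establishes a restricted isometry property for $\mathcal{A}$ on rank-one (or low-rank) matrices with $\sigma\times s$ sparse support. For a fixed support pair $(S,T)$, the restricted operator has independent rows in the $Q$-randomness. I will bound the deviation of $\|\mathcal{A}(Z)\|_2^2$ from $\|Z\|_F^2$ using a chaos or Bernstein estimate, then take a union bound over $\binom{m}{\sigma}\binom{n}{s}$ supports combined with an $\epsilon$-net on the unit sphere of rank-one matrices. The entropy of this set is of order $(s+\sigma)\log(mn)$. Because the Fourier factor $b_\ell$ is deterministic and incoherent, the sub-exponential tails pick up an extra factor, which yields the requirement $m \gtrsim s\sigma\,\mathrm{polylog}$.

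The third step is the algorithmic analysis. For HiHTP, I will show that the hierarchical thresholding projection is quasi-optimal and that the RIP on $3(\sigma,s)$-bisparse rank-one matrices makes the iteration contract linearly, as in \cite{hiHTP,flinth2022bisparse}. For SPF, the analysis combines the RIP with a good initialization by thresholding $\mathcal{A}^*(y)$ and an alternating-projection contraction, following \cite{lee2017near}.

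I expect the \textbf{main obstacle} to be the refined bound $m\gtrsim (s+\sigma)\,\mathrm{polylog}$. There the RIP cannot simply be union-bounded over all bisparse supports, and the initialization must find the supports from few measurements. My first approach would use the spectral flatness assumption (no dominant entries in $h$ or $x$) to show that the entrywise statistic $|(\mathcal{A}^*y)_{ij}|$ concentrates, so that its row and column sums reveal $\supp(h)$ and $\supp(x)$ separately. Each of these is an $s$- or $\sigma$-dimensional problem, which gives the additive sample count. After support identification, the problem reduces to a low-rank recovery problem on an $\sigma\times s$ block, and here $O(s+\sigma)$ measurements suffice by standard rank-one RIP.
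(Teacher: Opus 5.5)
The paper gives no proof of this statement; it is quoted from \cite{flinth2022bisparse,lee2017near}. Your lifting $y=\mathcal{A}(hx^T)$ is the right starting point and is how \cite{flinth2022bisparse} begins. The way you fill in the cited arguments, however, has two genuine gaps.

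First, the HiHTP part rests on the wrong isometry property. HiHTP does not enforce rank one: it runs on the lifted linear problem with hierarchical thresholding. Its iterates and their differences are therefore arbitrary hierarchically sparse matrices, and its convergence theorem needs the hierarchical RIP on all such matrices (at a constant multiple of the target sparsity). An RIP on ``$3(\sigma,s)$-bisparse rank-one matrices'' is not enough. The stronger property is exactly where $s\sigma$ comes from. Restricted to a single $\sigma\times s$ support block, the lifted operator is a linear map from an $s\sigma$-dimensional space into $\C^m$, so it has a nontrivial kernel whenever $m<s\sigma$. In that regime neither the hierarchical RIP nor the HiHTP guarantee can hold. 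So $s\sigma$ is not a tail-bound artifact of a rank-one net argument, and a rank-one RIP, whatever it costs, does not make the HiHTP iteration contract. What is needed, and what \cite{flinth2022bisparse} supplies, is a hierarchical RIP estimate for the lifted operator, whose columns are the cyclic shifts of the Gaussian columns of $Q$.

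Second, your route to the additive bound misplaces the flatness assumption. Spectral flatness is a Fourier-domain condition (roughly $\|\widehat h\|_\infty\lesssim\|h\|_2/\sqrt m$), not the absence of dominant entries: $h=e_0$ is perfectly spectrally flat and maximally peaky. Time-domain flatness is in fact the hard case for your support detection. For flat factors, a row $i\in\supp(h)$ of $hx^T$ carries energy $\|h\|_2^2\|x\|_2^2/\sigma$, while entries of $\mathcal{A}^*(y)$ fluctuate at level at least $\|h\|_2\|x\|_2/\sqrt m$. Any row aggregate over the $s$ unknown columns therefore collects noise energy of order $s\|h\|_2^2\|x\|_2^2/m$, which forces $m\gtrsim s\sigma$ again; this is the familiar sparse-PCA-type barrier for such statistics.

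The additive count for SPF in \cite{lee2017near} comes from the opposite assumption: sparse factors with fast-decaying coefficients, i.e.\ a few dominant entries, which is what makes thresholding initialization cheap. That result is also proved for Gaussian measurement operators, so BCS is not ``exactly'' their model.

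Conversely, at the one place where Fourier flatness genuinely matters, you treat the step as standard. Up to normalization, $\|\mathcal{A}(hx^T)\|_2^2=\sum_\ell|\widehat h_\ell|^2|\widehat{Qx}_\ell|^2$. Its concentration is governed by $\max_\ell|\widehat h_\ell|^2$, which can be as large as $\sigma\|h\|_2^2/m$. If $h$ is constant on $\sigma$ consecutive indices, most of the energy of $\widehat h$ sits on $O(m/\sigma)$ frequencies. Then $x$ is effectively seen through only $O(m/\sigma)$ Gaussian measurements, and even your final rank-one RIP on a fixed $\sigma\times s$ block needs $m\gtrsim s\sigma$.

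A correct argument must therefore use a Fourier-domain flatness condition on $h$ to control the conditioning of the convolution operator. It also needs a separate mechanism for support initialization, since time-domain flatness does not provide one below $s\sigma$.
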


\begin{figure}
    \centering
    \includegraphics[width=0.9\linewidth]{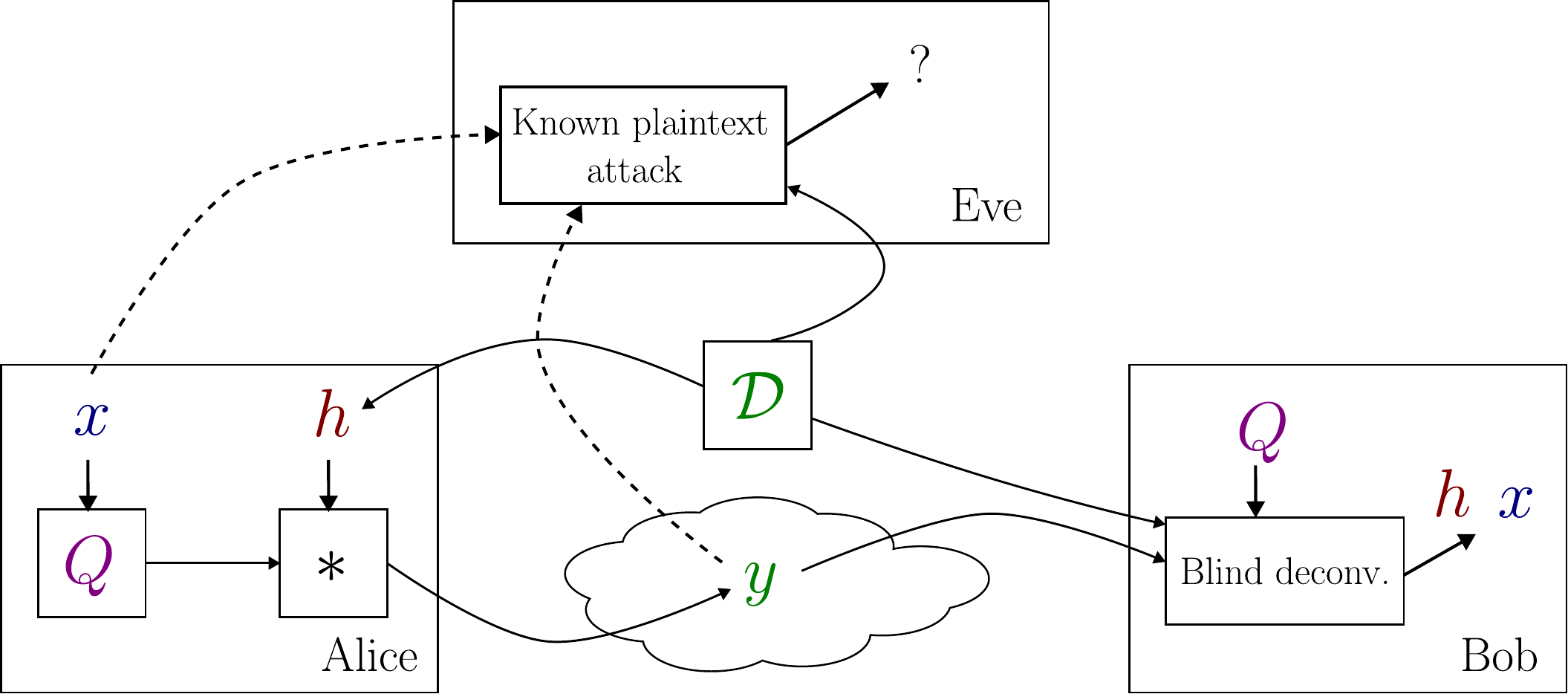}
    \caption{A graphical depiction of the BCS scheme, and the considered known plaintext attack. }
    \label{fig:placeholder}
\end{figure}

Throughout this paper we assume that the parameters of the BCS scheme are chosen such that decryption succeeds with overwhelming probability. The particular choice of the blind deconvolution algorithm is not important for the subsequent security analysis; we only require that legitimate users can reliably recover the transmitted plaintext under the stated assumptions. For further details on sparse blind deconvolution and its recovery guarantees we refer the reader to \cite{flinth2022bisparse,matilda_thesis}.

\subsection{Complexity} 
On the encryption side, BCS introduces only the additional cost of sampling a fresh filter $h$ and computing the convolution
$$
y= h*Qx
$$
after the linear measurement $Qx.$ The filter does not need to be stored or transmitted and the convolution is simple to implement. 

Decryption is computationally more demanding, since Bob must solve a blind deconvolution problem, and at the same time recover the unknown filter and plaintext. Its exact computational cost depends on the chosen recovery algorithm. However, there do exist numerically efficient algorithms to perform the blind deconvolutions. The sparse power factorization essentially boils down to repeatedly applying operations used for linear recovery, and hence has essentially the same cost.

One has to take into account that the measurement matrix $Q$ needs to have more rows in order for blind recovery to succeed compared to solving the standard linear compressed sensing problem $Qx\to x$. However, for a fixed filter-sparsity $\sigma$, the number of required measurements still grows linearly with the number of non-zero elements in $x$. Thus, for fixed filter sparsity $\sigma$, the dependence on the plaintext sparsity s remains linear, up to logarithmic factors, as in standard compressed sensing.

\subsection{Adversarial Model}\label{sec:adverserial-model}
Throughout this paper, we consider a strong known-plaintext or chosen-plaintext adversary, denoted by Eve. The adversary knows the public parameters of the scheme, including the distribution $\mathcal{D}$ of the random convolution filters, but neither the secret sensing matrix
$$
Q \in \mathbb{C}^{m \times n}$$
nor the individual filter realizations used during encryption.

Eve observes plaintext--ciphertext pairs

$$
(x_k,y_{k,p}), \qquad k\in \{1, \dots, M\},\; p\in \{1, \dots, r\},
$$
where
$$
y_{k,p} = h_{k,p}*Qx_k,
$$
and each filter
$$
h_{k,p} \sim \mathcal{D}
$$
is sampled independently. Here, $M$ denotes the number of distinct plaintexts available to Eve, and $r$ the number of independent encryptions observed for each plaintext. In the known-plaintext setting, these observations arise from legitimate communications, whereas in the chosen-plaintext setting Eve may choose the plaintexts $x_k$ and obtain repeated independent encryptions of each.

Eve's objective is to recover the secret sensing matrix $Q$. Knowledge of $Q$ allows Eve to apply the same blind deconvolution procedure as a legitimate receiver to future ciphertexts. Thus, key recovery compromises the confidentiality of subsequent encryptions under the same measurement matrix.

Our key-recovery attack naturally decomposes into two stages. 

First, for each plaintext $x_k$, Eve uses the repeated ciphertext observations $\{y_{k,p}\}_{p=1}^r$ to estimate the corresponding projective codeword $[Qx_k]$ by exploiting the statistical properties of the random filters. Second, the resulting projective measurements
$$
[Qx_1],\ldots,[Qx_M]
$$
are used to reconstruct the secret sensing matrix $Q$. The first stage is statistical and constitutes the main novel contribution beyond the information-theoretic setting of~\cite{flinth2026bilinear}, whereas the second relies on the identifiability results established in the next section.

\section{Security analysis} \label{sec:security_analysis}
In this section, we revisit the information-theoretic security analysis of Bilinear Compressive Security. Unlike the finite-sample setting considered later in the paper, we assume here that Eve has access to the exact ciphertext distribution induced by the random filters. This idealized observation model allows us to characterize when the secret measurement matrix is uniquely determined by the available observations. The resulting identifiability results form the theoretical foundation for the finite-sample attack developed in \Autoref{sec:finite_samples}.

\subsection{Attack model}
As discussed in \Autoref{sec:adverserial-model}, Eve observes encryptions 
$$
y_{k,p}=h_{k,p}*Qx_k, \qquad h_{k,j}\sim\mathcal D,
$$
for a collection of plaintexts $x_k$, $k\in \{1, \dots, M\}$. In this section, we consider the idealized infinite-sample limit of this adversarial model: for each plaintext $x_k$, Eve is assumed to know the exact ciphertext distribution induced by the random filter.

This idealization removes the finite-sample estimation problem and allows us to focus solely on the information-theoretic identifiability: 
\begin{quote}
\emph{Under what conditions on $\mathcal D$ and $(x_k)_{k\in \{1, \dots, M\}}$ do these ciphertext distributions uniquely determine the secret measurement matrix $Q$?}
\end{quote}

We formalize this idealized observation model as follows:

\begin{ass}[Infinite-sample observation model]
Let $x_1,\ldots,x_M$ be plaintexts available to Eve. For each $x_k$, Eve has access to the exact probability distribution of 
$$
h*Qx_k \qquad h\sim\mathcal D.
$$ 
Eve has complete knowledge of the filter distribution $\mathcal D$, and can observe as many ciphertexts $y_{k,p},~p=0,1,2, \dots$ as needed, but does not observe the individual filter realizations.
\end{ass}

This model is highly favorable to Eve. In particular, exact knowledge of the ciphertext distributions corresponds to the idealized limit of arbitrarily many independent encryptions of each plaintext. In \Autoref{sec:finite_samples}, we return to the practical setting in which Eve observes only finitely many ciphertexts per plaintext.

\subsection{Preliminaries}
What should we consider a successful recovery of the secret key $Q$? We first formalize what it means to recover the measurement matrix $Q$. Note, since here is a scaling ambiguity, the ciphertext distributions considered below may be invariant under multiplication of $Q$ by a global unit scalar, we consider key rocovery up to such a global phase. Recall that $\mathbb{S}^1=\{\theta\in\mathbb C:|\theta|=1\}$.

\begin{defi}[Key retrieval] \label{def:rec}
Let $h$ have a known distribution $\mathcal{D}$. We say that a set of vectors $(x_k)_{k\in \{1, \dots, M\}}$ can \emph{retrieve the key} for $\mathcal D$-distributed $h$ if
the map 
$$
[Q]\mapsto (h*Qx_k)_{k \in \{1, \dots, M\}}
$$ 
from $\C^{m \times n}/ \mathbb{S}^1$ to the set of probability distributions on $(\C^m)^M$ is injective, where the right-hand side denotes the probability distribution induced by the draw $h \sim \mathcal{D}.$
\end{defi}

\begin{rem}
    The intuitive meaning of \autoref{def:rec} is most easily explained in the negative: 
    Key retrieval fails, if there exist two keys $Q, Q'$ that are not unit scalar multiples of each other but for which $h*Qx_k$ and $h*Q'x_k$ have the same distribution for every $k \in \{1, \dots, M\}$. 
\end{rem}

Before we get to the actual assumption on $\calD$ that we wish to study, we shortly discuss, as in \cite{flinth2026bilinear}, the case where $h$ is standard complex Gaussian with covariance matrix $I$. This example illustrates the main ideas of the analysis.

By the Fourier convolution theorem, the distribution of $h*Qx_k$ is equivalently described in the Fourier domain by

\begin{align*}
    \widehat{h} \odot \widehat{Qx_k},
\end{align*}
where $\odot$ denotes the pointwise product. Since $h$ is standard complex Gaussian, so is its Fourier transform $\widehat{h}$. Consequently, its entries are independent and rotationally symmetric, so the ciphertext distributions will never reveal more than the \emph{magnitude vectors} $|FQx_k|$, $k \in \{1, \dots, M\}$, where $F$ denotes the Fourier matrix. 

Since $F$ is invertible, recovering $Q$ from these magnitude vectors is equivalent to recovering $N=FQ$ from
$$
|Nx_k|, \qquad k \in \{1, \dots, M\}.
$$

Let $n_\ell$ denote the $\ell$th row of $N$. These observations are precisely the magnitude measurements 

\begin{align} \label{eq:indep_PR}
 \left|\langle n_\ell,x_k\rangle\right|, \quad {\ell \in \{1, \dots, m\}, \; k \in \{1, \dots, M\}}.
\end{align}

However, these measurements can never determine $N$ up to a single global phase: Multiplying each row $n_\ell$ with different unit scalars $\theta_\ell \in \mathbb{S}^1$ leaves all measurements in \eqref{eq:indep_PR} unchanged. Thus:

\begin{prop} \label{prop:gaussian}
    No set of vectors can retrieve the key for standard complex Gaussian distributed $h$.
\end{prop}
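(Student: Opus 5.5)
We will show non-injectivity directly: for every codebook $Q$ and every collection $(x_k)$ we produce a second codebook $Q'$ with the same ciphertext distributions and $[Q']\neq[Q]$. The construction works in the Fourier domain. Let $F$ be the (unitary or unnormalized) DFT matrix, so that $\widehat{h*w}=\widehat{h}\odot\widehat{w}$. Set $N=FQ$ and choose unit scalars $\theta_1,\dots,\theta_m\in\mathbb S^1$ that are not all equal, say $\theta_1=1$ and $\theta_2=-1$, with the rest equal to $1$ (this requires $m\ge 2$). Define $\Theta=\Diag(\theta_\ell)$ and
$$
Q' = F^{-1}\Theta F Q .
$$
With this choice, $FQ'x_k=\Theta\, FQx_k$ for every $k$.

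Next we check that the joint distribution of $(h*Q'x_k)_k$ equals that of $(h*Qx_k)_k$, with the same filter $h$ used for every $k$ as in \autoref{def:rec}. In the Fourier domain,
$$
\widehat{h*Q'x_k}=\widehat h\odot\Theta\widehat{Qx_k}=(\Theta\widehat h)\odot\widehat{Qx_k}.
$$
It therefore suffices that $\Theta\widehat h$ has the same distribution as $\widehat h$. If $h$ is standard complex Gaussian, then $\widehat h=Fh$ is complex Gaussian with covariance $FF^*$, which is a multiple of $I$. Its entries are thus i.i.d.\ circularly symmetric, and a diagonal unitary $\Theta$ preserves this law. Since the whole tuple $(h*Q'x_k)_k$ is the image of $\Theta\widehat h$ under the same deterministic map that sends $\widehat h$ to $(h*Qx_k)_k$, the joint distributions coincide.

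Finally, we must show that $[Q']\neq[Q]$. Suppose $Q'=\theta Q$ for some $\theta\in\mathbb S^1$. Then $\Theta N=\theta N$, so $(\theta_\ell-\theta)n_\ell=0$ for every row $n_\ell$ of $N$. This is where the choice of $\Theta$ matters. If $N$ has at least two nonzero rows, we place the differing phases on two of those rows, which gives a contradiction. The degenerate case, where $N$ has at most one nonzero row (including $Q=0$), is the main obstacle, since there the construction cannot work. For instance, if $Q=0$ then every codebook $Q'$ consistent with it must satisfy $Q'x_k=0$ for all $k$. We would therefore interpret the proposition as the statement that the map in \autoref{def:rec} is not injective on the whole space $\mathbb C^{m\times n}/\mathbb S^1$. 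For that it suffices to exhibit one such pair. Take any $Q$ whose $N=FQ$ has two nonzero rows, for example $N$ with the first two rows equal to a nonzero vector and the remaining rows zero, and apply the construction above; this yields distinct classes $[Q]\neq[Q']$ with identical ciphertext distributions for every choice of $(x_k)$, and the proposition follows.
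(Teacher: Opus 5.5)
Your proof is correct and follows the paper's argument. You pass to the Fourier domain, use that $\widehat h$ has i.i.d.\ circularly symmetric entries, and confound $Q$ with $F^{-1}\Theta FQ$ by rotating the rows of $N=FQ$ by different unit scalars. Your version is also more careful than the paper's sketch in two places: it treats the joint law under a common filter $h$ (which reveals more than the magnitudes $|FQx_k|$ but still not the row-wise phases), and it states explicitly that one needs $m\ge 2$ and a $Q$ for which $FQ$ has at least two nonzero rows.
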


\begin{rem}
    We stress that the above proposition is only of illustrative relevance. Indeed, when $h$ is Gaussian distributed, it will almost surely have all entries non-zero, making the discussed sparse blind deconvolution procedures inapplicable.
\end{rem}

Now let us now take a step back and inspect \eqref{eq:indep_PR}. Mathematically, it is a system of \emph{uncoupled phase retrieval problems}.

\begin{defi}[Phase retrieval]
    Let $(x_k)_{k\in \{1, \dots, M\}}$ be a set of vectors in $\C^n$. The problem of recovering a vector $q \in \C^n$ up to a global phase from the measurements 
    $$
    \bigl(|\langle q,x_k\rangle|\bigr)_{k\in\{1, \dots, M\}}
    $$ 
is called the \emph{phase retrieval problem}. We say that vectors $(x_k)_{k \in \{1, \dots, M\}}$ \emph{do phase retrieval} if the map
\begin{align*}
\phi: \C^n/\mathbb{S}^1 &\to \R^M, \\
[q] &\mapsto \bigl(|\langle q,x_k\rangle|\bigr)_{k\in\{1, \dots, M\}}
\end{align*}
is injective.
\end{defi}

There are many sets of vectors that do phase retrieval. In fact, a generic set of $M$ vectors in $\C^n$ with $M\geq 4n-4$ does  \cite{bandeira2014saving}. The reason \eqref{eq:indep_PR} cannot be solved is hence not that we are dealing with a set of phase retrieval problems, but rather that they are \emph{uncoupled}. Indeed, a set of vectors that do phase retrieval will be able to recover each row $n_\ell$ of $N$ up to a unit scalar $\theta_\ell$, but since these unit scalars may be different, that will not reveal $N$ up to one global unit scalar.

\subsection{Phase symmetric filters}
As already discussed, the assumption that the filters are standard complex Gaussian is too powerful to allow for codebook retrieval: the independence of the entries of their Fourier transforms leads to the uncoupled phase-retrieval problems in \eqref{eq:indep_PR}. On the other hand, some randomness in the filter distribution $\calD$ is necessary for BCS to provide additional security over the linear compressed-sensing case. Namely, if $\calD$ is deterministic and, for example, always outputs the trivial filter $e_0$, then
$$
h*Qx_k=e_0*Qx_k=Qx_k,
$$
and the problem reduces to the linear case.

It turns out that one only needs the following  non-restrictive assumption in order to obtain security:

\begin{defi}[Phase-symmetric filter distribution] A filter distribution $\calD$ is \emph{phase symmetric} if, for $h \sim \calD $ and for every unit scalar $\theta\in \mathbb{S}^1$, the random filters $h$ and $\theta h$ have the same distribution.
\end{defi}

This condition is fulfilled by many distributions. For instance, Alice may use the identity filter $e_0$ multiplied by a uniformly random unit scalar, leaving the vector $Qx$ minimally confounded. Another, more general, class of models that are phase symmetric is the following:
\begin{align} \label{eq:GaussianFill}
    h = z \odot b, \quad z\in \{0,1\}^n, \ b \sim \calN(0,\Sigma_b),
\end{align}
where a random support is first selected and its nonzero entries are then filled using entries of a possibly non-standard Gaussian vector.

Phase symmetry implies the following necessary condition for key retrieval: 

\begin{lem} \label{lem:PhiMap}
    If $h$ is phase symmetric, a set of vectors $(x_k)_{k \in \{1, \dots, M\}}$ can only retrieve the key if the map
     \begin{align*}
     \Phi: \mathbb C^{m\times n}/\mathbb S^1 &\to (\mathbb C^m/\mathbb S^1)^M, \\
    [Q]&\mapsto([Qx_k])_{k\in\{1, \dots, M\}}
    \end{align*}
    is injective.
\end{lem}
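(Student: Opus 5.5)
We argue by contraposition: if $\Phi$ fails to be injective, then key retrieval in the sense of \autoref{def:rec} fails as well. Suppose there are $Q,Q'\in\C^{m\times n}$ with $[Q]\neq[Q']$ but $\Phi([Q])=\Phi([Q'])$. Then for every $k\in\{1,\dots,M\}$ there is a unit scalar $\theta_k\in\mathbb S^1$ with
$$
Q'x_k=\theta_k Qx_k .
$$
The phases $\theta_k$ may depend on $k$, and this is the point: injectivity of $\Phi$ only ties the codewords together up to an individual phase per plaintext. (Note that $\Phi$ is well defined on the quotient, since replacing $Q$ by $\theta Q$ multiplies every $Qx_k$ by the same $\theta$.)

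The key step uses phase symmetry of $\calD$. By bilinearity of circular convolution, for $h\sim\calD$,
$$
h*Q'x_k=h*(\theta_k Qx_k)=(\theta_k h)*Qx_k .
$$
Because $\calD$ is phase symmetric, $\theta_k h$ has the same distribution as $h$. Therefore $(\theta_k h)*Qx_k$ and $h*Qx_k$ have the same distribution: each is the image of a $\calD$-distributed random vector under the same measurable map $g\mapsto g*Qx_k$. Hence, for every $k$, the ciphertext distributions of $h*Qx_k$ and $h*Q'x_k$ coincide, although $[Q]\neq[Q']$.

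The one point needing care is the meaning of the right-hand side in \autoref{def:rec}, the "probability distribution on $(\C^m)^M$". The observation model samples an independent filter $h_{k,p}$ for each encryption. So the relevant object is the collection of marginal laws for each $k$, or equivalently their product law on $(\C^m)^M$. I would state this explicitly. With independent filters across plaintexts, the product law is determined by the marginals, and equality of all marginals gives equality of the product laws. The map $[Q]\mapsto(h*Qx_k)_k$ therefore sends $[Q]$ and $[Q']$ to the same element, so it is not injective, and the vectors $(x_k)$ cannot retrieve the key.

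If one instead insisted on a single shared $h$ for all $k$, the argument would break, since the $\theta_k$ differ across $k$. This is why the reading via independent filters (per-$k$ distributions) should be fixed at the start, consistent with the attack model, in which each $h_{k,p}$ is sampled independently.
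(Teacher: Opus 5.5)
Your proof is correct and follows the paper's argument: contraposition, writing $Q'x_k=\theta_k Qx_k$ with phases that may depend on $k$, moving $\theta_k$ onto the filter via $h*(\theta_k Qx_k)=(\theta_k h)*Qx_k$, and using phase symmetry to conclude that each per-plaintext ciphertext law coincides. You also note that the conclusion needs \autoref{def:rec} to be read per plaintext (independent filters, i.e.\ marginal or product laws) rather than as the joint law under a single shared $h$; this is a valid point, and the paper's proof relies on it only implicitly, consistent with its attack model.
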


\begin{proof}
    Assume that $\Phi$ is not injective. Then there exist matrices $Q,Q'$ that are not unit scalar multiples of each other as well as unit scalars $\theta_k \in \mathbb{S}^1$ such that $$
    Qx_k=\theta_kQ'x_k \qquad \text{for every } k \in \{1, \dots, M\}.$$
Consequently,
    \begin{align*}
        h*Qx_k = h*(\theta_k Q'x_k)= (\theta_k h)*Q'x_k
    \end{align*}
    Now, since $h$ is phase symmetric, $\theta_k h$ and $h$ have the same distribution. Hence, the above equation shows that $h*Qx_k$ and $h*Q'x_k$ have the same distribution for every $k \in \{1, \dots, M\}$. Therefore, $Q$ and $Q'$ cannot be distinguished from ciphertext distributions, and the key cannot be retrieved.
\end{proof}

\begin{rem}
    It should be noted that \autoref{lem:PhiMap} is a one-way
statement: the injectivity of $\Phi$ is necessary but not sufficient for key retrieval. In particular, we do not claim that $\Phi([Q])$ can be determined from the ciphertext distributions induced by $h*Qx_k$, let alone from finitely many samples of these distributions.

Indeed, \autoref{prop:gaussian} gives a concrete and simple
counterexample to the converse of \autoref{lem:PhiMap}. A standard Gaussian filter $h$ is phase-symmetric, but the proposition shows that the key cannot be retrieved, irrespective of whether $\Phi$ is injective. 
\end{rem}

What is the mathematical structure of the problem of inverting the map $\Phi$? Recall that
$$
\Phi([Q])=\bigl([Qx_k]\bigr)_{k \in \{1, \dots, M\}},
$$
so the image of $[Q]$ consists of $M$ vectors $w_k=Qx_k$, each determined only up to an individual unit scalar. Equivalently, the amplitudes and relative phases for each $w_k$ are known, while its global phase remains undetermined.

Variants of relative phase retrieval, including practical recovery algorithms, were studied in \cite{kornprobst2021phase}. However, that work does not provide fundamental guarantees of injectivity of $\Phi$ required here. We establish the following Theorem, which is an interesting theoretical contribution of its own.

\begin{theo} \label{th:PRvsPhi}
     The map $\Phi$ from \autoref{lem:PhiMap} is injective if and only if the set of vectors $(x_k)_{k \in \{1, \dots, M\}}$ does phase retrieval.
\end{theo}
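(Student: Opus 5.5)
The plan is to prove the two implications separately: the ``only if'' direction by an explicit counterexample built from a single failing pair, and the ``if'' direction by exploiting that $\Phi$ controls every linear combination of the rows of $Q$, not just the rows themselves. Throughout, write $\langle q,x\rangle$ for the inner product in the phase retrieval definition, and identify a row of $Q$ with a vector $q\in\C^n$ via $(Qx)_\ell=\langle q_\ell,x\rangle$.

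\emph{Only if.} Suppose $(x_k)$ does not do phase retrieval. Then there are $q,q'\in\C^n$ with $[q]\neq[q']$ and $|\langle q,x_k\rangle|=|\langle q',x_k\rangle|$ for all $k$.
\begin{itemize}
\item Let $Q$ have first row $q$ and $Q'$ have first row $q'$, with all other rows zero in both matrices.
\item Then $Qx_k=\langle q,x_k\rangle e_1$ and $Q'x_k=\langle q',x_k\rangle e_1$.
\item These two complex numbers have equal modulus, so they differ by a unit scalar $\theta_k$. If both vanish, any $\theta_k$ works. Hence $[Qx_k]=[Q'x_k]$ for every $k$.
\item $[Q]\neq[Q']$, since $Q=\theta Q'$ would force $q=\theta q'$.
\end{itemize}
So $\Phi$ is not injective.

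\emph{If.} Assume $(x_k)$ does phase retrieval and $Qx_k=\theta_kQ'x_k$ for all $k$. The key step is to test against an arbitrary $c\in\C^m$. Write $A,B$ for the linear maps $c\mapsto\sum_\ell \bar c_\ell q_\ell$ and $c\mapsto\sum_\ell \bar c_\ell q'_\ell$, so that $\langle c,Qx\rangle=\langle Ac,x\rangle$.
\begin{itemize}
\item From $Qx_k=\theta_kQ'x_k$ we get $|\langle Ac,x_k\rangle|=|\langle Bc,x_k\rangle|$ for all $k$.
\item Phase retrieval then gives $Ac=\alpha(c)Bc$ for some $\alpha(c)\in\mathbb S^1$, for every $c$.
\end{itemize}
It then remains to show that a linear map pair with this pointwise unimodular proportionality satisfies $A=\alpha B$ for a single $\alpha\in\mathbb S^1$. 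Since $A$ encodes $Q$ conjugate-linearly, this means $Q=\bar\alpha Q'$, i.e.\ $[Q]=[Q']$.

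\emph{The constant-$\alpha$ step.} This is the step needing care. First, $\|Ac\|=\|Bc\|$, so $\ker A=\ker B$. I split by the rank of $B$.
\begin{itemize}
\item If $B=0$, then $A=0$ and we are done.
\item If $\operatorname{rank}B\ge 2$, take $c_1,c_2$ with $Bc_1,Bc_2$ linearly independent. Expanding $A(c_1+c_2)$ gives
\begin{align*}
\alpha(c_1+c_2)(Bc_1+Bc_2)=\alpha(c_1)Bc_1+\alpha(c_2)Bc_2,
\end{align*}
so by independence $\alpha(c_1)=\alpha(c_1+c_2)=\alpha(c_2)$.
\item For $c$ with $Bc\neq0$ but $Bc$ parallel to $Bc_1$, compare $c$ with $c_2$ (whose image is independent of $Bc$). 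This shows $\alpha(c)=\alpha(c_2)=\alpha(c_1)$, so $\alpha$ is constant off $\ker B$. On $\ker B$ both maps vanish, so $A=\alpha B$ holds everywhere.
\item If $\operatorname{rank}B=1$, write $Bc=\lambda(c)u$ with a nonzero linear functional $\lambda$. Then $A$ has range in $\spann(u)$ and the same kernel, so $Ac=\mu\lambda(c)u$ for a fixed scalar $\mu$. Comparing with $Ac=\alpha(c)\lambda(c)u$ on any $c$ with $\lambda(c)\neq0$ gives $\mu=\alpha(c)\in\mathbb S^1$, hence $A=\mu B$.
\end{itemize}
I expect this rank bookkeeping to be the main, though elementary, obstacle.
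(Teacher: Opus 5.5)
Your proof is correct. The ``only if'' direction matches the paper's: a matrix with a single nonzero row carrying a confounded pair. Your ``if'' direction, however, takes a genuinely different and much more elementary route.

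The paper proceeds in several steps:
\begin{itemize}
\item it lifts $\Phi$ to $\Psi([Q])=(\nu(Qx_k))_k$ and recasts injectivity of $\Psi$ as generalized phase retrieval for the Hermitian forms $\ralpha^k_{ij},\ialpha^k_{ij}$;
\item it applies the Wang--Xu real-linear criterion;
\item it uses the real-span theorem of Bandeira et al.\ on the diagonal forms to get $w_i=\lambda_i\iota q_i$ row by row;
\item it then needs the off-diagonal forms, the complement property, and a separate zero-row case to force a common $\lambda$.
\end{itemize}

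You instead notice that the phases $\theta_k$ do not depend on the test vector. So for every $c\in\C^m$, the single vectors $Q^*c$ and $(Q')^*c$ have identical phaseless data. Plain injectivity of phase retrieval then gives $Q^*c\in\mathbb S^1\cdot (Q')^*c$ pointwise, and your rank argument makes the unit scalar constant. This uses only the definition of phase retrieval. It also makes the coupling between rows transparent: by polarization, and given the diagonal entries, the off-diagonal entries of $\Psi$ carry exactly the phaseless data of the tests $c=e_i+e_j$ and $c=e_i+\iota e_j$.

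In exchange, the paper's route is phrased in terms of the lifted rank-one data that the covariance attack actually estimates, and in terms of a tangent-type criterion. That is the natural starting point if one later wants stability rather than bare injectivity. For injectivity alone, your argument is cleaner.

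One minor slip. With $(Qx)_\ell=\langle q_\ell,x\rangle$, the map satisfying $\langle c,Qx\rangle=\langle Ac,x\rangle$ is the linear map $c\mapsto\sum_\ell c_\ell q_\ell=Q^*c$. Your $\sum_\ell\bar c_\ell q_\ell$ is conjugate-linear and amounts to testing with $\bar c$. This is harmless, since $c$ is arbitrary and your constancy argument uses only additivity and kernels.
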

The proof of \autoref{th:PRvsPhi} is postponed to the next section as it is quite involved. Let us here instead use it to derive our fundamental security result. Combined with \autoref{lem:PhiMap}, the theorem shows that as long as the distribution of $h$ is guaranteed to be sparse, universal retrievability of the key will require observing ciphertexts for a number of plaintexts \emph{quadratic} in the dimension $n$, in contrast to the linearly many needed to break the linear compressed sensing scheme. 

The following characterization of phase retrieval will be useful for proving our main result.

\begin{lem}[{\cite[Lemma~9]{bandeira2014saving}}]\label{lem:phase-retrieval-characterization}
A set of vectors $(x_k)_{k \in \{1, \dots, M\}}$ does phase retrieval if and only if there is no nonzero Hermitian $H\in \C^{n \times n}$ of rank one or two such that
$$
\langle x_kx_k^*,H \rangle =0 \qquad \text{ for all } k\in \{1, \dots, M\}.
$$
\end{lem}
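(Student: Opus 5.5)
The plan is to \emph{lift} the phase retrieval problem to the space of Hermitian matrices, where the measurements become linear. Under the paper's convention for $\langle\cdot,\cdot\rangle$, and using the real inner product $\langle A,B\rangle=\mathrm{tr}(A^*B)$ on Hermitian matrices, we have
$$
\abs{\langle q,x_k\rangle}^2=\langle x_kx_k^*,qq^*\rangle .
$$
If the paper's convention is conjugate-linear in the other slot, then $qq^*$ is replaced by $\bar q\bar q^{*}$. This changes nothing below, since $q\mapsto\bar q$ is a bijection compatible with $\sim$. Since $t\mapsto t^2$ is injective on $[0,\infty)$, the vectors $(x_k)$ do phase retrieval if and only if $[q]\mapsto(\abs{\langle q,x_k\rangle}^2)_k$ is injective. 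Moreover, $[q]\mapsto qq^*$ is a well-defined injection from $\C^n/\mathbb S^1$ onto the rank-$\le 1$ positive semidefinite matrices, because $qq^*=pp^*$ forces $p=\theta q$ with $\theta\in\mathbb S^1$. Hence phase retrieval is equivalent to injectivity of the real-linear map $\mathcal A(X)=(\langle x_kx_k^*,X\rangle)_k$ on the set $\{qq^*:q\in\C^n\}$.

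For the direction ``no such $H$ $\Rightarrow$ phase retrieval'', I argue by contraposition. If phase retrieval fails, pick $[q]\neq[p]$ with equal measurements and set $H=qq^*-pp^*$. This $H$ is Hermitian and has rank at most two. It is nonzero by the injectivity of the lifting. It satisfies $\langle x_kx_k^*,H\rangle=\abs{\langle q,x_k\rangle}^2-\abs{\langle p,x_k\rangle}^2=0$ for all $k$.

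For the converse, let $H\neq0$ be Hermitian of rank one or two with $\mathcal A(H)=0$. Take its spectral decomposition $H=\lambda_1uu^*+\lambda_2vv^*$ with orthonormal $u,v$, real $\lambda_i$, and $\lambda_1\neq0$. I then split into two cases.
\begin{itemize}
\item \emph{Opposite signs}, say $\lambda_1>0>\lambda_2$. Set $q=\sqrt{\lambda_1}\,u$ and $p=\sqrt{\abs{\lambda_2}}\,v$. Then $H=qq^*-pp^*$, and $\mathcal A(H)=0$ says exactly that $q$ and $p$ have the same measurements. Also $[q]\neq[p]$, since otherwise $H=0$.
\item \emph{Same sign}, possibly with $\lambda_2=0$; replacing $H$ by $-H$ we may assume $\lambda_1>0$ and $\lambda_2\ge0$. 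Then $0=\langle x_kx_k^*,H\rangle=\lambda_1\abs{\langle u,x_k\rangle}^2+\lambda_2\abs{\langle v,x_k\rangle}^2$ is a sum of nonnegative terms. Hence $\langle u,x_k\rangle=0$ for all $k$. So $q=u$ and $p=0$ give identical (zero) measurements, while $[u]\neq[0]$.
\end{itemize}
In both cases phase retrieval fails, which proves the converse.

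The only delicate point is this case distinction. In the definite case, $H$ is not a difference of two rank-one positive semidefinite matrices in an informative way, and one must notice that the zero vector serves as the confounding partner. Keeping track of the inner-product (conjugation) convention is routine, as noted above.
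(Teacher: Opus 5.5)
Your proof is correct and is essentially the standard argument behind \cite[Lemma~9]{bandeira2014saving}, which the paper cites without giving a proof: the difference $qq^*-pp^*$ of two confounded lifts supplies the nonzero rank-$\le 2$ null matrix in one direction, and the spectral decomposition with the definite/indefinite case split (using $0$ as the confounding partner in the definite case) gives the converse. The only blemish is your conjugation-convention aside, which is unnecessary and, as phrased, not quite right: since $\abs{\langle q,x_k\rangle}=\abs{\langle x_k,q\rangle}$, the identity $\abs{\langle q,x_k\rangle}^2=x_k^*qq^*x_k=\langle x_kx_k^*,qq^*\rangle$ holds under either convention, whereas $\langle x_kx_k^*,\bar q\bar q^{*}\rangle=\abs{q^{T}x_k}^2$ is a different quantity (your observation that $q\mapsto\bar q$ is a bijection compatible with $\sim$ makes this harmless anyway).
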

We can now state our main information-theoretic security result.

\begin{theo} \label{th:main} Let $s\leq n$ and let  $\calD$ be phase symmetric. Then no sets of $s$-sparse vectors $(x_k)_{k\in \{1, \dots, M\}}$ can retrieve the key 
\begin{enumerate}[(a)]
    \item for any $M$ if $s=1$,
    \item if $s\geq 2$ and 
    $$
    M< \max\big(\tfrac{n(n-1)}{s(s-1)}, 4n - 2 - 2\log_2(n-1)\big).
    $$
\end{enumerate}
\end{theo}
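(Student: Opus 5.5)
By \autoref{lem:PhiMap} and \autoref{th:PRvsPhi}, it suffices in every case to show that the $s$-sparse vectors $(x_k)_{k\in\{1,\dots,M\}}$ do not do phase retrieval. For this I will use \autoref{lem:phase-retrieval-characterization}: I will exhibit a nonzero Hermitian $H$ of rank at most two with $\langle x_kx_k^*,H\rangle = x_k^*Hx_k = 0$ for all $k$.

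\emph{Case (a), $s=1$.} Each $x_k$ is a multiple of a standard basis vector $e_{j_k}$, so $x_k^*Hx_k=|c_k|^2H_{j_kj_k}$. I take $H=e_1e_2^*+e_2e_1^*$ (assuming $n\geq 2$). This matrix is Hermitian, has rank two, and has zero diagonal, so it annihilates every measurement for any $M$. For $n=1$ the key is trivially identifiable up to phase, so (a) needs $n\ge2$. I would mention this edge case, or note that a $1$-sparse vector carries no relative phase information.

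\emph{Case (b), bound $n(n-1)/(s(s-1))$.} I restrict to Hermitian $H$ with zero diagonal. For $x$ supported on $S$ with $|S|\le s$, the quantity $x^*Hx$ involves only the entries $H_{ij}$ with $\{i,j\}\subseteq S$, $i\neq j$. At most $s(s-1)/2$ unordered pairs lie in such a support. The $M$ supports therefore cover at most $Ms(s-1)/2 < n(n-1)/2$ pairs, so some pair $\{i,j\}$ is contained in no support. Setting $H=e_ie_j^*+e_je_i^*$ gives a rank-two Hermitian matrix with $x_k^*Hx_k=0$ for all $k$. This is a pure counting argument.

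\emph{Case (b), bound $4n-2-2\log_2(n-1)$.} This part does not use sparsity at all. It is the known lower bound on the number of vectors needed for complex phase retrieval, due to Heinosaari--Mazzarella--Wolf and used in \cite{bandeira2014saving}. Their argument is topological: if $M$ vectors do phase retrieval, one obtains an embedding of $\mathbb{CP}^{n-1}$, or of a related quotient, into $\R^{M}$ or $\R^{M-1}$. Known non-embedding results for complex projective space, which rely on Stiefel--Whitney classes and involve the binary expansion of $n-1$, then force $M\ge 4n-2-2\alpha(n-1)\ge 4n-2-2\log_2(n-1)$, where $\alpha$ counts binary ones. I would cite this rather than reprove it.

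Combining the two bounds, if $M$ is below their maximum, one of them yields failure of phase retrieval, and hence failure of key retrieval by \autoref{th:PRvsPhi} and \autoref{lem:PhiMap}. The main obstacle is the second bound. It genuinely requires the embedding-theoretic result and cannot be reached by elementary counting. The delicate points are confirming that the cited statement applies in exactly this form, including the $\log_2$ relaxation and small $n$, and converting injectivity of the intensity map into the required embedding.
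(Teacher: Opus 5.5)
Apart from one step, your proposal is the paper's proof. The matching parts are the reduction to failure of phase retrieval via \autoref{lem:PhiMap} and \autoref{th:PRvsPhi}, the zero-diagonal rank-two matrix $H=e_ie_j^*+e_je_i^*$ for (a), and the pair-counting argument for the first bound in (b); counting unordered rather than ordered pairs changes nothing. Your remark that (a) needs $n\ge 2$ is correct, and the paper passes over it.

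The step that fails is the one you flagged yourself. The relaxation $4n-2-2\alpha(n-1)\ge 4n-2-2\log_2(n-1)$, with $\alpha$ the binary digit sum, is equivalent to $2^{\alpha(n-1)}\le n-1$. The smallest integer with $a$ binary ones is $2^a-1$, so this is false exactly when $n-1=2^a-1$, i.e.\ when $n$ is a power of two. Already for $n=2$ one has $\alpha(1)=1>0=\log_2 1$. What does follow from $n-1\ge 2^{\alpha(n-1)}-1$ is $\alpha(n-1)\le \log_2 n$, which gives the threshold $4n-2-2\log_2 n$.

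A better citation cannot fix this, because the statement as written is false at $n=2$. Take $s=n=2$, so the claimed threshold is $\max(1,6)=6$, and let $x_1=e_1$, $x_2=e_2$, $x_3=e_1+e_2$, $x_4=e_1+\iota e_2$.

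\begin{itemize}
\item For $q=(a,b)$ these vectors give $|a|$, $|b|$, $|a+b|$ and $|a-\iota b|$. Hence they give $|a|$, $|b|$, $\Re(a\bar b)$ and $\mathrm{Im}(a\bar b)$, which determine $q$ modulo $\mathbb{S}^1$. So the $x_k$ do phase retrieval, and $\Phi$ is injective by \autoref{th:PRvsPhi}.
\item Now use the phase-symmetric filter $h=\theta e_0$ with $\theta$ uniform on $\mathbb{S}^1$. The law of $h*Qx_k=\theta Qx_k$ is the uniform measure on the circle $\{\theta Qx_k:\theta\in\mathbb{S}^1\}$, or a point mass if $Qx_k=0$. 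It therefore determines $[Qx_k]$.
\item Consequently key retrieval is equivalent to injectivity of $\Phi$, and these $M=4<6$ vectors retrieve the key.
\end{itemize}

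The paper's proof has the same defect: it quotes the Heinosaari--Mazzarella--Wolf bound directly in the $\log_2(n-1)$ form. If you replace $\log_2(n-1)$ by $\log_2 n$, or by $\alpha(n-1)$ itself, your argument proves the corrected statement without further change.
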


\begin{proof}
    By \autoref{th:PRvsPhi} and \autoref{lem:PhiMap}, it suffices to show that $(x_k)_{k\in \{1, \dots, M\}}$ does not do phase retrieval.
    
    Fundamental lower bound results on phase retrieval as given in Theorem 3 in \cite{heinosaari2013quantum}, show that phase retrieval is not possible when
    $$
    M< 4n-2-2\log_2(n-1),
    $$ which already proves one part of (b).

For the remaining claims, by \autoref{lem:phase-retrieval-characterization}, it suffices to construct a nonzero Hermitian matrix $H$ of rank one or two satisfying
$$
    \langle x_kx_k^*,H\rangle=0
    \qquad\text{for all } k\in\{1, \dots, M\}.
$$

   Now let us prove (a). Suppose that the $x_k$ are $1$-sparse. Then each matrix $x_kx_k^*$ is diagonal. 
   For any distinct $i,j \in \{1, \dots, n\}$, the matrix
   $$
   H=e_ie_j^*+e_je_i^*
   $$
   is Hermitian, has rank two, and has zero diagonal. Thus, 
   $$
   \langle x_kx_k^*,H\rangle =0 \qquad \text{for all } k \in \{1, \dots, M\}.
   $$
   Implying that $(x_k)_{k \in \{1, \dots, M\}}$ does not do phase retrieval, independently of $M.$

   Now let us prove the remaining part of (b). Suppose now that $s \geq 2$. Each $s$-sparse vector $x_k$ has at most $s(s-1)$ ordered pairs of distinct indices $(i,j)$ such that both $i$ and $j$ belong to $\supp(x_k)$. Hence, the $M$ vectors $x_k$ together cover at most 
   $$
   Ms(s-1)
   $$
   of the $n(n-1)$ ordered pairs $(i,j)$ with $i \neq j.$

   If
   $$
   M< \frac{n(n-1)}{s(s-1)},
   $$
   then there must exist distinct $i,j \in [1, \dots, n\}$ such that, for every $k \in \{1, \dots, M\}$, at least one of $i$ and $j$ does not belong to $\supp(x_k)$. Thus,

   $$
    (x_k)_i\overline{(x_k)_j}=0
    \qquad\text{for every }k\in\{1, \dots, M\}.
   $$
 We let
   $$
   H=e_ie_j^* +e_je_i^*,
   $$
which again yields in a nonzero Hermitian matrix of rank two, satisfying
$$
    \langle x_kx_k^*,H\rangle=0
    \qquad\text{for all }k\in\{1, \dots, M\}.
$$

By \autoref{lem:phase-retrieval-characterization}, the vectors do not do phase retrieval, which proves the claim.
\end{proof}

 Before moving on to the proof, we discuss a few remarks concerning the interpretation of \autoref{th:main}.  First, it should be stressed that the theorem is an \emph{injectivity} result. It states that under certain condition, there will exist two codebooks $Q$ and $Q'$ that cannot be distinguished through an observation of $\Phi$. It hence does \emph{not} say that there for \emph{every} given codebook $Q$ exists a $Q'$ that it is confounded with. This means that certain classes of codebooks $Q$ may be retrievable from the measurements $\Phi([Q])$ with fewer than quadratically many $x_k$. On the other hand, the theorem \emph{only} concerns injectivity. It does not say that recovery of $Q$ through an efficient algorithm is possible as soon as the number of measurements exceeds the threshold. We study these questions empirically in \Autoref{sec:numerics}. As we will see, recovery of $Q$ from the observation of $\Phi([Q])$ indeed is possible, and the quadratic threshold  seems to only be realised for special distributions of the $Q$.

 The result in part (a) for $1$-sparse plaintexts deserves a special comment. While at first glance, it might seem very strong, it should be interpreted with caution in terms of safety for the communication scheme. In the case of $1$-sparse $x_k$, the determination of $Q$ up to a global scalar is namely not necessary in order for Eve to decrypt messages.
Instead, it suffices to be able to calculate $\Phi([Q])$ for any set of $1$-sparse vectors from observations of the ciphertexts. Doing so means that every column of $Q$ is known up to a scalar. When phased with decoding a new unknown sparse message $x$, one can simply compare the new $\Phi([Q])$ with the known columns, and from that determine for which $k \in \{1, \dots, n\}$ one has $x=e_k$ modulo unit scalars.

 One should also note that we thus far only have been conserved with the idealised infinite sampling setting. We will return to the finite sampling setting once we have presented the proof of  \autoref{th:PRvsPhi}.

\section{Proof of \autoref{th:PRvsPhi}} \label{sec:proof}
In this section we prove \autoref{th:PRvsPhi}. We begin with the following auxiliary result, which allows us to circumvent the slightly inconvenient structure of the space $({\C^m/\mathbb S^1})^M$.

\begin{lem} \label{lem:nu}
    The map
    \begin{align*}
        \nu : \C^m/\mathbb S^1 &\to \C^{m,m},\\ [\lambda] &\mapsto (\overline{\lambda_i}\lambda_j)_{i,j \in \{1, \dots, m\}}
    \end{align*}
    is injective.
\end{lem}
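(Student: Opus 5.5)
The argument has two parts. First I check that $\nu$ is well defined on the quotient. Then I show injectivity by recovering $\lambda$ up to a unit scalar from the matrix $\nu([\lambda])$ alone.

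For well-definedness, suppose $\lambda' = \theta\lambda$ with $\theta\in\mathbb S^1$. Then
$$
\overline{\lambda'_i}\lambda'_j=\overline{\theta}\theta\,\overline{\lambda_i}\lambda_j=\overline{\lambda_i}\lambda_j,
$$
because $|\theta|=1$. Hence $\nu$ does not depend on the chosen representative. Note that $\nu([\lambda])$ is simply the (transposed) rank-one Gram matrix $\overline{\lambda}\lambda^{T}$.

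For injectivity, assume $\overline{\lambda_i}\lambda_j=\overline{\mu_i}\mu_j$ for all $i,j$. I split into two cases.

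\emph{Case $\lambda=0$.} The diagonal entries give $|\mu_i|^2=|\lambda_i|^2=0$ for every $i$. So $\mu=0$ and $[\mu]=[\lambda]$.

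\emph{Case $\lambda\neq 0$.} Pick an index $i_0$ with $\lambda_{i_0}\neq 0$. The diagonal identity $|\lambda_{i_0}|^2=|\mu_{i_0}|^2$ shows $\mu_{i_0}\neq 0$. Define
$$
\theta=\frac{\mu_{i_0}}{\lambda_{i_0}},
$$
which lies in $\mathbb S^1$ since $|\mu_{i_0}|=|\lambda_{i_0}|$. Now use the row $i=i_0$ of the identity:
$$
\lambda_j=\frac{\overline{\mu_{i_0}}}{\overline{\lambda_{i_0}}}\,\mu_j=\overline{\theta}\,\mu_j \qquad\text{for all } j.
$$
Therefore $\mu=\theta\lambda$, and so $[\mu]=[\lambda]$.

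There is no real obstacle in this proof. The only point requiring care is the choice of a nonzero coordinate $i_0$, which is needed to divide, together with the separate treatment of the zero vector. Everything else follows directly from the entries of the rank-one matrix.
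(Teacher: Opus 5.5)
Your proof is correct and takes essentially the same route as the paper's. Both arguments get well-definedness from $|\theta|=1$, use the diagonal entries to handle $\lambda=0$ and to obtain $|\lambda_{i_0}|=|\mu_{i_0}|$, and then use the row indexed by a nonzero coordinate to recover every other entry up to the common unit scalar; you read off $\lambda_j=\overline{\theta}\mu_j$ directly from row $i_0$ instead of going through the ratios $\lambda_j/\lambda_{i_0}$, which is slightly cleaner.
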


\begin{proof}
     First, it is clear that the map $\nu$ is well defined. Namely, if $\lambda = \theta \kappa$ for some $\theta \in \mathbb{S}^1$, then 
     $$
     \overline{\lambda_i}\lambda_j = \overline{\theta\kappa_i} \theta \kappa_j = |\theta|^2\overline{\kappa_i}\kappa_j
     = \overline{\kappa_i}\kappa_j.
     $$
    
    Now suppose that $\nu([\lambda]) = \nu([\kappa])$. Inspecting the diagonals of that matrix equation, we obtain 
    $$
    |\lambda_i|^2=|\kappa_i|^2 \qquad \text{for all } i \in \{1, \dots, m\}.
    $$
    In particular, if $\lambda=0$, then also  $\kappa=0$, and thus $[\lambda]=[\kappa]$. Else, choose an index $i$ such that $\lambda_i\neq 0$. Since $|\lambda_i|=|\kappa_i|$, there exists a $\theta \in \mathbb{S}^1$ such that
    $\lambda_0 = \theta \kappa_0$. Now, since $\nu([\lambda])_{ij}=\nu([\kappa])_{ij}$, we may argue that
    \begin{align*}
\frac{\overline{\lambda_i}\lambda_j}{\abs{\lambda_i}^2} =\frac{\overline{\kappa_i}\kappa_j}{\abs{\kappa_i}^2} 
    \end{align*}
and thus
$$
 \frac{\lambda_j}{\lambda_i} = \frac{\kappa_j}{\kappa_i} \qquad \text{for all } j.
$$
    
Multiplying this equation by $\lambda_i=\theta\kappa_i$ yields $\lambda_j = \theta \kappa_j$ for every $j$. Hence, $\nu(\lambda)=\nu(\kappa)$ implies that $\lambda=\theta \kappa$, and thus $[\lambda]=[\kappa]$. Therefore, $\nu$ is injective, which is what we wanted to prove.
\end{proof}

An immediate consequence of \autoref{lem:nu} is that $\Phi$ is injective if and only if the map 
\begin{align*}
    \Psi:\C^{m\times n}/\mathbb S^1 &\to (\C^{m,m})^M, \\
    [Q] &\mapsto (\nu(Qx_k))_{k\in \{1, \dots, M\}}
\end{align*}
is injective. More explicitly,
\begin{align}
    \Psi([Q])     =
    \left(\overline{(Qx_k)_i}(Qx_k)_j
    \right)_{\substack{i,j\in\{1, \dots, m\}\\k\in\{1, \dots, M\}}}.
    \label{eq:psi}
\end{align}
With this fact in our hands, proving one direction of \autoref{th:PRvsPhi} is straightforward.

\begin{prop}\label{prop:maina1}
    If $\Phi$ is injective, then $(x_k)_{k \in \{1, \dots, M\}}$ does phase retrieval. 
\end{prop}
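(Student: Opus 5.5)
We argue by contraposition: assuming $(x_k)$ does not do phase retrieval, we construct two matrices $Q,Q'$ that are not unit scalar multiples of each other but satisfy $[Qx_k]=[Q'x_k]$ for all $k$. This shows $\Phi$ is not injective. The natural choice is to build $Q$ and $Q'$ from the confounding pair of the phase retrieval problem, placed in a single row so that the phases cannot be decoupled.

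Since the $x_k$ do not do phase retrieval, there exist $q,q'\in\C^n$ with $[q]\neq[q']$ and $|\langle q,x_k\rangle|=|\langle q',x_k\rangle|$ for all $k$. We take $Q$ to have first row $q^*$, chosen so that $(Qx_k)_1=\langle x_k,q\rangle$ or its conjugate according to the inner-product convention, and all other rows zero. We define $Q'$ in the same way with first row $q'^*$. Then $Qx_k$ and $Q'x_k$ are both supported on the first coordinate, with entries of equal modulus. Hence there is a $\theta_k\in\mathbb{S}^1$ with $Qx_k=\theta_kQ'x_k$; when both entries vanish, any $\theta_k$ works. Thus $[Qx_k]=[Q'x_k]$ for every $k$, that is, $\Phi([Q])=\Phi([Q'])$.

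It remains to check that $[Q]\neq[Q']$. If $Q=\theta Q'$ for some $\theta\in\mathbb{S}^1$, then comparing first rows gives $q^*=\theta q'^*$, hence $q=\bar\theta q'$ and $[q]=[q']$, a contradiction. The case $m=1$ is covered by the same construction, and for $m>1$ the zero rows play no role.

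No step here is a genuine obstacle; this is the easy direction. The only care needed is bookkeeping: matching the sesquilinear convention in $\langle q,x_k\rangle$ with the row action of $Q$, possibly conjugating $q$, which does not affect moduli or the equivalence classes. One should also treat the degenerate $k$ with $\langle q,x_k\rangle=0$, which we have handled above. The substantial work, via $\Psi$ and \autoref{lem:phase-retrieval-characterization}, is needed only for the converse direction.
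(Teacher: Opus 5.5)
Your proof is correct and uses the same construction as the paper: argue by contraposition, take single-row matrices $e_1q^*$ and $e_1(q')^*$ from a confounding pair, and check $[Q]\neq[Q']$ by comparing first rows. The only difference is that you verify $[Qx_k]=[Q'x_k]$ directly from the one-coordinate support, whereas the paper shows $\Psi([Q])=\Psi([W])$ and then uses \autoref{lem:nu}; your route is slightly more direct and needs no auxiliary lemma.
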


\begin{proof}
    We prove the contrapositive. Assume that $(x_k)_{k \in \{1, \dots, M\}}$ does not do phase retrieval. Then there exist two vectors 
    $q,w \in \C^m$ with $[q]\neq [w]$ such that $$
    \abs{\langle q, x_k\rangle} = \abs{\langle w, x_k \rangle} \qquad \text{for all } k \in \{1, \dots, M\}.
    $$
    But with $$Q=e_0q^* \qquad \text{and} \qquad W=e_0w^*,$$ where $e_0\in \C^m$ is the first standard basis vector, we get
    \begin{align*}
        \Psi([Q]) &= (\abs{\langle q,x_k \rangle}^2\delta_{0i}\delta_{0j})_{i,j\in \{1, \dots, m\}, k \in \{1, \dots, M\}} \\
        &=  (\abs{\langle w,x_k \rangle }^2\delta_{0i}\delta_{0j})_{i,j\in \{1, \dots, m\},k \in \{1, \dots, M\}} = \Psi([W]).   \end{align*}
        It is also clear that $[Q] \neq [W]$: if $Q=\theta W$ for some $\theta \in \mathbb{S}^1$, then $q=\overline{\theta}w$, which contradicts $[q] \neq [w]$. Hence, $\Psi$ is not injective, and by \autoref{lem:nu}, neither is $\Phi$, which proves the claim.
\end{proof}

To prove the other direction of \autoref{th:PRvsPhi} is considerably more involved. The first step is to realize that the recovery of $Q$ from $\Psi([Q])$ can be interpreted as a \emph{generalized phase retrieval problem}, as defined in \cite{wang2019generalized}. 

\begin{defi}[Generalized phase retrieval]
    Let $\alpha_{\ell \in \{1, \dots, L\}}$ be a set of Hermitian sesquilinear forms on a complex Hilbert space $\calH$. The problem of recovering $w\in \calH$ up to a global phase scalar from the measurements 
    $$
    \alpha_\ell(w,w),\qquad \ell \in \{1, \dots, L\}
    $$
    is a \emph{generalized phase retrieval problem}. We say that the forms \emph{do generalized phase retrieval} if the map 
    \begin{align*}
    \varphi: \calH / \mathbb{S}^1 &\to \R^L \\
    [w] &\mapsto (\alpha_\ell(w,w))_{\ell \in \{1, \dots, L\}}
    \end{align*}
    is injective.
\end{defi}

$\Psi$ does not directly define a generalized phase retrieval problem as defined above. To relate this definition to $\Psi$, consider $\C^{m\times n}$ as a complex Hilbert space and define, for $i,j\in\{1, \dots, m\}$ and $k\in\{1, \dots, M\}$, so $\Psi$ define a number of sesquilinear forms:

\begin{align*}
    \alpha_{ij}^k(Q,W) = \langle(e_i, Qx_k\rangle \langle(Wx_k,e_j\rangle.
\end{align*}

These forms are not Hermitian in general, but instead satisfy 

$$
\alpha_{ij}^k(Q,W) = \overline{\alpha_{ji}^k(W,Q)}.
$$

Le therefore define the following construction:
\begin{align*}
    \mathrm{r}\alpha_{ij}^{k} = \tfrac{1}{2}(\alpha_{ij}^k + \alpha_{ji}^k) \ \ialpha_{ij}^{k} = \tfrac{1}{2\iota}(\alpha_{ij}^k - \alpha_{ji}^k)
\end{align*}
Both forms $ \mathrm{r}\alpha_{ij}^{k}$ and $\ialpha_{ij}^{k}$ are Hermitian, and we have 
$$
\alpha_{ij}^k=\ralpha_{ij}^k+\iota \ialpha_{ij}^k.
$$

In fact, $\ralpha_{ij}^k(Q,Q)$ and $\ialpha_{ij}^k(Q,Q)$ are the real and imaginary parts of $\alpha_{ij}^k(Q,Q)$, respectively. So we have
\begin{equation*}
   \begin{matrix} \alpha_{ij}^k(Q,Q)=\alpha_{ij}^k(Q',Q')  \\
   \text{ for all $i$,$j$,$k$} \end{matrix}  \Leftrightarrow \begin{matrix} \ralpha_{ij}^k(Q,Q)=\ralpha_{ij}^k(Q',Q')  \\ \ialpha_{ij}^k(Q,Q)=\ialpha_{ij}^k(Q',Q')  \\
   \text{ for all $i, j, k$} \end{matrix}.
\end{equation*}
Thus, $\Psi$ is injective if and only if the collection of forms $\ralpha_{ij}^k,\ialpha_{ij}^k$ does generalized phase retrieval. 

In order to show the other direction of \autoref{th:PRvsPhi}, we hence need to argue that the $\alpha_{ij}^k$ do generalized phase retrieval when the $x_k$ do classical phase retrieval. To do so, we will need some fundamental results from the literature about generalized and classical phase retrieval, which we now present. We have slightly reformulated the statements to better match our notation.

\begin{theo} \label{th:genPR}
\begin{enumerate}[(a)]
    \item \cite[Theorem 2.2]{wang2019generalized}
    The forms $\ralpha_{ij}^k, \ialpha_{ij}^k$ do generalized phase retrieval if and only if there do not exist $Q,W \in \C^{m \times n}$ with 
    $$
    Q\notin \spann_\R (\iota W)
    $$ 
    such that
    $$
    \Re\bigl(\ralpha_{ij}^k(Q,W)\bigr) = \Re\bigl(\ialpha_{ij}^k(Q,W)\bigr)=0
    $$
    for all $i, j \in \{1, \dots, m\}$ and $k \in \{1, \dots, M\}.$
    
    \item (\cite[Theorem 4]{bandeira2014saving})
    If $(x_k)_{k \in \{1, \dots, M\}}$ does phase retrieval, then
    \begin{align*}
        \spann_\R\{x_kx_k^*u:k\in\{1, \dots, M\}\} = \spann_\R(\iota u)^\perp
    \end{align*}
    for every $u\in \C^n$. That is, the \emph{real} span of the vectors $(x_kx_k^*u)_{k \in\{1, \dots, M\}}$ is equal to the orthogonal complement with respect to the real inner product $\Re(\langle \cdot,\cdot\rangle)$) of the vector $\iota u$.

    \item (\cite[Theorem 7]{bandeira2014saving},\cite[Theorem 2.2]{grohs2020phase}) If $(x_k)_{k \in \{1, \dots, M\}}$ does phase retrieval, then it has the \emph{complement property}: for every subset $S$ of $\{1, \dots, M\}$, either $(x_k)_{k \in S}$ or $(x_k)_{k \in S^c}$ spans $\C^n$. 
\end{enumerate}
\end{theo}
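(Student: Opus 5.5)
Each part of \autoref{th:genPR} is quoted from the literature, but all three follow from one polarization identity, and I would prove them that way rather than just cite. For a Hermitian sesquilinear form $\alpha$ on $\calH$ and $Q,W\in\calH$,
\begin{align*}
\alpha(Q+W,Q+W)-\alpha(Q-W,Q-W)=2\bigl(\alpha(Q,W)+\alpha(W,Q)\bigr)=4\,\Re\,\alpha(Q,W).
\end{align*}
Every pair $X,Y\in\calH$ can be written as $X=Q+W$, $Y=Q-W$ with $Q=\tfrac12(X+Y)$ and $W=\tfrac12(X-Y)$. Hence two vectors have identical measurements exactly when $\Re\,\alpha_\ell(Q,W)=0$ for all $\ell$. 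It remains to say when $[X]=[Y]$ in these coordinates.

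\textbf{Part (a).} Suppose $X=\theta Y$ with $\theta\in\mathbb S^1$. Then $(1-\theta)Q=-(1+\theta)W$. If $\theta\neq1$, this gives
\begin{align*}
Q=-\tfrac{1+\theta}{1-\theta}\,W,
\end{align*}
and $\tfrac{1+\theta}{1-\theta}$ is purely imaginary for $|\theta|=1$. So $Q\in\spann_\R(\iota W)$. Conversely, every purely imaginary $\iota t$ arises as such a ratio for some $\theta\neq1$, since the Cayley map from the imaginary axis to $\mathbb S^1\setminus\{1\}$ is a bijection. The case $\theta=1$ corresponds to $W=0$.

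The main subtlety is this degenerate case. With $W=0$ we have $X=Y$, so that choice must be excluded. Otherwise, since $\Re\,\alpha(Q,0)=0$ trivially, any $Q\neq 0$ would be a spurious counterexample. I would therefore prove the statement with the non-degeneracy requirement $W\neq 0$ made explicit. This is the convention in \cite{wang2019generalized}, and it is the reading needed in the application to $\ralpha^k_{ij},\ialpha^k_{ij}$. With that convention, non-injectivity of $\varphi$ is equivalent to the existence of $Q$ and $W\neq0$ with $Q\notin\spann_\R(\iota W)$ and all $\Re\,\alpha_\ell(Q,W)=0$.

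\textbf{Part (b).} Apply the same identity to the forms $(u,v)\mapsto\langle u,x_k\rangle\overline{\langle v,x_k\rangle}$, so that $\Re\,\alpha_k(u,v)=\Re\langle x_kx_k^*u,v\rangle$. Thus $v$ lies in the real orthogonal complement of $V_u:=\spann_\R\{x_kx_k^*u\}$ exactly when $u+v$ and $u-v$ have equal magnitude measurements. Two inclusions then give the claim:
\begin{itemize}
\item The vector $\iota u$ always lies in $V_u^\perp$, because $\Re\bigl(-\iota\,|\langle u,x_k\rangle|^2\bigr)=0$.
\item If $v\in V_u^\perp$, then phase retrieval forces $[u+v]=[u-v]$, and by the argument of part (a) this means $v\in\spann_\R(\iota u)$ (the case $u=0$ being trivial).
\end{itemize}
Hence $V_u^\perp=\spann_\R(\iota u)$. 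Taking real orthogonal complements in the real space $\C^n\cong\R^{2n}$ gives $V_u=\spann_\R(\iota u)^\perp$.

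\textbf{Part (c).} I argue by contrapositive. Suppose neither $(x_k)_{k\in S}$ nor $(x_k)_{k\in S^c}$ spans $\C^n$. Then there are nonzero $u\perp\{x_k\}_{k\in S}$ and $v\perp\{x_k\}_{k\in S^c}$. For every $k$ one of $\langle u,x_k\rangle$, $\langle v,x_k\rangle$ vanishes, so
\begin{align*}
|\langle u+v,x_k\rangle|=|\langle u-v,x_k\rangle|\qquad\text{for all }k.
\end{align*}
If $u+v=\theta(u-v)$, the computation in part (a) would give $u=\iota t v$ for some real $t$. Then $u$ would be orthogonal to every $x_k$, and phase retrieval would be violated directly, since $u$ and $0$ would have the same measurements. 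Otherwise $[u+v]\neq[u-v]$ are two distinct classes with equal measurements. In either case the $x_k$ do not do phase retrieval.
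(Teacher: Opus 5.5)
\paragraph{Comparison with the paper.} The paper gives no proof of \autoref{th:genPR}; it imports all three parts from \cite{wang2019generalized}, \cite{bandeira2014saving} and \cite{grohs2020phase}. Your route is genuinely different and self-contained. You reduce everything to the polarization identity $\alpha(Q+W,Q+W)-\alpha(Q-W,Q-W)=4\Re\,\alpha(Q,W)$. You combine it with the observation that $[Q+W]=[Q-W]$ holds exactly when $W=0$ or $Q\in\spann_\R(\iota W)$, which follows from the Cayley map $\theta\mapsto\frac{1+\theta}{1-\theta}$ from $\mathbb S^1\setminus\{1\}$ onto $\iota\R$.

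This gains two things over the citations. First, you never have to translate Wang--Xu's matrix formulation on $\C^d$ to the forms on $\C^{m\times n}$. Second, the degenerate cases become visible. Your insistence on $W\neq 0$ in (a) is correct and necessary: as printed, $W=0$ makes every $Q\neq 0$ a witness, so the statement would say no family ever does generalized phase retrieval. \autoref{prop:maina2} only uses the nondegenerate version (it assumes $Q\neq0$ and concludes $W\in\spann_\R(\iota Q)$), so nothing downstream changes. Your proof of (c) is correct as written.

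\paragraph{The error in (b).} The same degeneracy reappears in (b), and your claim that ``the case $u=0$ is trivial'' is wrong. For $u=0$ you have $V_0=\{0\}$ but $\spann_\R(\iota\cdot 0)^\perp=\C^n$, so the asserted identity is false, not trivial.

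In your argument the failure sits exactly in the step ``by the argument of part (a) this means $v\in\spann_\R(\iota u)$.'' Applying (a) to $X=u+v$, $Y=u-v$ only yields $v=0$ or $u=\iota t v$ for some real $t$. Turning $u=\iota t v$ into $v\in\spann_\R(\iota u)$ requires $t\neq 0$, that is, $u\neq 0$. Equivalently, apply (a) to $X=v+u$, $Y=v-u$; then the degenerate alternative is precisely $u=0$.

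The correct statement restricts to $u\in\C^n\setminus\{0\}$, which is how \cite[Theorem 4]{bandeira2014saving} phrases it. With that restriction your proof of (b) is complete. It is also all that \autoref{prop:maina2} needs, since (b) is only invoked there with $u=q_i\neq 0$.
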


We now have all the tools to show the other direction of \autoref{th:PRvsPhi}.

\begin{prop}\label{prop:maina2}
    If $(x_k)_{k \in \{1, \dots, M\}}$ do phase retrieval, then  $\Phi$ is injective.
\end{prop}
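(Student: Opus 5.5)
Rather than going through the generalized phase retrieval criterion of \autoref{th:genPR}(a), I would argue directly from the definition of phase retrieval, applied to rows and to linear combinations of rows. Assume $\Phi([Q])=\Phi([Q'])$, so there are $\theta_k\in\mathbb S^1$ with $Qx_k=\theta_kQ'x_k$ for all $k$. Write $r_i,r_i'\in\C^{1\times n}$ for the $i$th rows of $Q,Q'$. The goal is one $\theta\in\mathbb S^1$ with $Q=\theta Q'$.

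\emph{Step 1 (rows individually).} For every $a\in\C^{1\times m}$ we have $aQx_k=\theta_k\,aQ'x_k$, hence $\abs{aQx_k}=\abs{aQ'x_k}$ for all $k$. Since $aQx_k=\langle x_k,(aQ)^*\rangle$ up to conjugation, this is an equality of phase retrieval measurements for the vectors $(aQ)^*$ and $(aQ')^*$. Because the $x_k$ do phase retrieval, $aQ=c_a\,aQ'$ for some $c_a\in\mathbb S^1$. Taking $a=e_i^*$ gives $r_i=c_ir_i'$ with $\abs{c_i}=1$. Rows with $r_i'=0$ force $r_i=0$ and can be ignored.

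\emph{Step 2 (matching the phases).} Fix $i,j$ with $r_i',r_j'\neq0$; I need to show $c_i=c_j$. I split into two cases.
\begin{itemize}
\item If $r_i',r_j'$ are linearly independent, take $a=e_i^*+e_j^*$. Step 1 gives $c_ir_i'+c_jr_j'=c\,(r_i'+r_j')$, and independence forces $c_i=c=c_j$.
\item If $r_j'=\lambda r_i'$ with $\lambda\neq0$, I use the $\theta_k$ directly. From $r_ix_k=\theta_kr_i'x_k$ and $r_i=c_ir_i'$ we get $(c_i-\theta_k)\,r_i'x_k=0$. So $\theta_k=c_i$ whenever $r_i'x_k\neq0$, and likewise $\theta_k=c_j$ whenever $r_j'x_k=\lambda r_i'x_k\neq0$. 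These are the same condition on $k$. It remains to find one such $k$, which requires the $x_k$ to span $\C^n$. This follows from the complement property \autoref{th:genPR}(c) with $S=\emptyset$, or simply from phase retrieval: a nonzero vector orthogonal to all $x_k$ would be confused with $0$. Then $c_i=c_j$.
\end{itemize}

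\emph{Step 3 (conclusion).} All nonzero rows share a common phase $\theta:=c_i$, so $Q=\theta Q'$ and $[Q]=[Q']$. If $Q'=0$, then $Q=0$ as well. Hence $\Phi$ is injective.

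The main obstacle is Step 2: phase retrieval alone only yields row-wise phases, which is exactly the uncoupled situation of \autoref{prop:gaussian}. The coupling has to come from applying phase retrieval to combinations of rows, or, in the degenerate case of dependent rows, from the shared per-plaintext phase $\theta_k$ together with the spanning property. I would double-check the conjugation convention in Step 1, but it only affects which vector is compared and not the magnitudes.
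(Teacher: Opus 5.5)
Your proof is correct, and it takes a genuinely different and more elementary route than the paper's.

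The paper never works with $Qx_k=\theta_kQ'x_k$ directly. It passes to $\Psi$ via \autoref{lem:nu} and recasts injectivity as generalized phase retrieval for the Hermitian forms $\ralpha_{ij}^k,\ialpha_{ij}^k$. It then relies on three imported results:
\begin{enumerate}[(i)]
\item the real-linear criterion \autoref{th:genPR}(a), which reduces everything to the linearized condition $\Re\,\ralpha_{ij}^k(Q,W)=\Re\,\ialpha_{ij}^k(Q,W)=0$;
\item the span identity \autoref{th:genPR}(b), applied to the diagonal forms to get $w_i=\lambda_i\iota q_i$ row by row;
\item the complement property \autoref{th:genPR}(c), applied to the off-diagonal forms to force a common $\lambda_i$ and to handle zero rows.
\end{enumerate}

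Your key observation is that $\theta_k$ is shared by all rows. Hence $aQ$ versus $aQ'$ is a classical phase retrieval instance for the same $x_k$, for every fixed row vector $a$. This captures the same inter-row coupling that the paper extracts from the off-diagonal products $\overline{(Qx_k)_i}(Qx_k)_j$. You obtain it at the nonlinear level, using only the definition of phase retrieval and the fact that it forces the $x_k$ to span $\C^n$.

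What your route buys is a short, self-contained proof. Zero rows and $Q'=0$ become trivial, and you avoid the nondegeneracy caveat implicit in \autoref{th:genPR}(a): as literally stated there, the pair $W=0$, $Q\neq 0$ must be excluded. What the paper's route buys is that it places the problem inside the generalized phase retrieval framework and works with the linearized condition. That is the natural starting point for stability estimates, but it is not needed for bare injectivity.

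A small simplification: your proportional-rows case can be treated like the independent one. If $r_j'=\lambda r_i'$ with $\lambda\neq 0$, take $a=e_i^*+\lambda^{-1}e_j^*$. Then $aQ=(c_i+c_j)r_i'$ and $aQ'=2r_i'$, so Step 1 gives $c_i+c_j=2c_a$. Hence $\abs{c_i+c_j}=2$, and since $\abs{c_i}=\abs{c_j}=1$ this forces $c_i=c_j$, with no appeal to the $\theta_k$ or the spanning property.
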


\begin{proof} 
    By \autoref{th:genPR}(a), \autoref{lem:nu}, and the discussion above, it is enough to prove that for any pair of matrices $Q,W \in \C^{m \times n}$ satisfy
    $$
    \Re(\ralpha_{ij}^{k}(Q,W))=\Re(\ialpha_{ij}^{k}(Q,W))=0,
    $$ we must have $W\in \spann_\R(iQ)$. If $Q=0$, it is surely in $\mathrm{span}_\R(\iota W)$, so that case can be excluded.

    Write $$
    q_i = Q^*e_i, \qquad  w_i = W^*e_i, \qquad i \in \{1, \dots, m\}.
    $$
    We begin by assuming that all $q_i\neq 0$. We will return to the other case later.

    Taking $i=j$, the condition $\Re(\ralpha_{ii}^{k}(Q,W))=0$ gives
    \begin{align*}
        0 = \Re\bigl(\sprod{q_i,x_k}\sprod{x_k,w_i}\bigr) = \Re\bigl(\sprod{x_kx_k^*q_i,w_i}\bigr)
    \end{align*}
    for every $k \in \{1, \dots, M\}.$
    
    By Theorem \ref{th:genPR}(b), $q_i\neq 0$, and the fact that the $x_k$ do phase retrieval
   $$
   \spann_\R\{x_kx_k^*q_i:k\in\{1, \dots, M\}\} =
    \spann_\R(\iota q_i)^\perp.
   $$
   Hence, by taking real linear combination of the above equation, we obtain that
   $$
   \Re\sprod{v,w_i}=0 \qquad
    \text{for all }
    v\in\spann_\R(\iota q_i)^\perp,
   $$
   and thus
$$
w_i \in \spann_\R(\iota q_i)
$$

Hence, for every $i\in\{1, \dots, m\}$, there exists $\lambda_i\in\R$ such that
$$w_i=\lambda_i\iota q_i.
$$

It remains to show that the $\lambda_i$ can be chosen independently of $i$. 
Substituting $w_i=\lambda_i\iota q_i$ and $w_j= \lambda_j \iota q_j$ into 
$$
\Re\bigl(\ralpha_{ij}^k(Q,W)\bigr)=0, \qquad
\Re\bigl(\ialpha_{ij}^k(Q,W)\bigr)=0,
$$
and defining 
$$
a_{ij}^k := \sprod{q_i,x_k}\sprod{x_k,q_j},
$$
gives
    \begin{align}
    \lambda_j a_{ij}^k + \lambda_i \overline{a_{ij}^k} &\in \R, \label{eq:aqre}\\
    \lambda_j a_{ij}^k
    - \lambda_i \overline{a_{ij}^k}
    &\in \iota\R. \label{eq:aqim}
    \end{align}
    
    By simply looking at the real and imaginary parts separately and utilizing that $\lambda_i,\lambda_j \in \R$, we see from \eqref{eq:aqre} that either $\lambda_i = \lambda_j$ or $a_{ij}^k\in \R$, and that $\eqref{eq:aqim}$ implies that either $\lambda_i = \lambda_j$ or $a_{ij}^k\in \iota \R$,     \eqref{eq:aqim} implies that either $\lambda_i=\lambda_j$ or $a_{ij}^k\in\iota\R$.
    
    Thus, if $\lambda_i\neq \lambda_j$, then
    $$
    a_{ij}^k \in \R \cap \iota\R= \{0\} \qquad \text{for every } k \in \{1, \dots, M\}.
    $$
    But that means 
    $$
    \sprod{q_i,x_k}\sprod{x_k,q_j}=0 \qquad \text{for all } k \in \{1, \dots, M\}.
    $$

Let 
$$
        S:=\{k\in\{1, \dots, M\}:\sprod{q_i,x_k}=0\} \sse \{1, \dots, M\}.
$$
For every $k\in S^c$, we then have $\sprod{x_k,q_j}=0$. By the complement property from \autoref{th:genPR}(c), either
$(x_k)_{k \in S}$ or $(x_k)_{k\in S^c}$ spans $\C^n$. In the first case, $q_i=0$, while in the second case $q_j=0$, both of which we have ruled out. Since $i$ and $j$ are arbitrary, we have shown that 
$$\lambda_i =\lambda_j$$ whenever $q_i, q_j \neq 0.$

It follows that there exists a single $\lambda\in\R$ such that
    $$
        w_i=\lambda\iota q_i
    $$
for every $i$ with $q_i\neq 0$. 

It remains the case when certain $q_i=0$. In that case, we however have $\alpha_{ij}^k(Q,W)=0$ for all $j$ and $k$. The equations 
\begin{align*}
\Re\bigl(\ralpha_{ij}^{k}(Q,W)\bigr)&=0, \\
\Re\bigl(\ialpha_{ij}^k(Q,W)\bigr)&=0
\end{align*}
therefore imply that 
$$
\alpha_{ji}^k(Q,W)=0
$$
for all $j$ and $k$, i.e.,
    \begin{align*}
        \sprod{q_j,x_k}\sprod{x_k,w_i}=0 \; \text{for all }j\in\{1, \dots, m\},\ k\in\{1, \dots, M\}.
    \end{align*}

Since by assumption $Q\neq 0$, there exists some $j$ such that $q_j \neq 0.$ Applying the complement property as above to this $q_j$ and $w_i$ shows that either $q_j=0$ or $w_i=0$. Since $q_j \neq 0$ by assumption, we must have $w_i=0$. Thus, for every $i$ with $q_i=0$,
$$
w_i= \lambda \iota q_i
$$
for any given $\lambda$. 
Together with the previous case, this gives
$$
    w_i=\lambda\iota q_i
    \qquad\text{for all }i\in\{1, \dots, m\}.
$$
for some common $\lambda \in \R$, and consequently,
$$
    W=-\lambda\iota Q\in\spann_\R(\iota Q),
$$
which concludes our proof.
\end{proof}

This concludes the proof of \autoref{th:PRvsPhi}, since it is the combination of \autoref{prop:maina1} and \autoref{prop:maina2}.

\section{Learning \texorpdfstring{$\Phi([Q])$}{Phi(Q)} from finite samples} \label{sec:finite_samples}
Until now, Eve has been granted an idealized advantage: unlimited access to the distributions $y_k=h*Qx_k$. This assumption is analytically convenient but practically unrealistic, as it requires arbitrarily many encryptions of each plaintext. In this section, we drop this idealization and ask if Eve's attack survives when instead we only let her observe finitely many samples of each $y_k$.

\subsection{A no-go case} 
We first show that if Bob and Alice are careless in their design of $Q$ and $\mathcal{D}$, namely if $h$ has its nonzero values concentrated, the filter provides no additional security whatsoever. 

The idea is that, for certain configurations of $Q$ and $h$,  Eve can use the blind deconvolution algorithm in \cite{pulse_stream_model} directly to recover $h*Qx_k$ modulo unit scalar, without ever drawing samples of $h$ -- making the role of the filter irrelevant. The algorithm in \cite{pulse_stream_model} exploits a simple separation property and is designed to retrieve a pair $(u,v)$ from only the observation of $u*v$. For it to work, we need to make strong structural assumptions on $u$ and $v$, namely that the non-zero entries in $u$ all are concentrated in a short block of indices, while the nonzero indices in $v$ are sufficiently spread out. Then, the convolution $u*v$ will consist a small number of separated copies of $u$, each multiplied with a different entry $v(i)$, written next to each other. From this, $u$ and $v$ can be read directly up to the unavoidable scalar ambiguity. See also Figure \ref{fig:separation}

\begin{rem} The authors of \cite{pulse_stream_model} go further in that they also consider the case where $u*v$ is only partially observed. Here, we only use the fully observed case as a theoretical tool.
\end{rem}

Let us formalize the aforementioned structural assumptions.

\begin{defi}\label{def:vec_Q_nogov2}
    \begin{itemize}
        \item [(a)] A vector $v \in \mathbb{R}^n$ is \emph{$(\Delta,S)$-separated} if $\abs{\supp(v)}\le S$, and writing
        $$
        \mathrm{supp}(v)=\{i_1, \ldots, i_s\}, \; i_1 <i_2<\cdots <i_s
         $$ 
        satisfies 
        $$
        i_{k+1} - i_k > \Delta \qquad \text{for all } i_k \in \{1, \cdots, s\}.
        $$ 
        We denote the set of all such vectors by $\mathcal{M}(\Delta,S)$.
        \item [(b)] A pair $(u,v)$ is $(\sigma,S,\Delta)$-structurally sparse if 
        $$
    \supp(u)\subseteq\{1,\ldots,\sigma\}, \qquad \text{where } \sigma\le\Delta
$$ and
$$v\in\mathcal M(\Delta,S),$$
 i.e., $u$ has its nonzero entries concentrated among its first $\sigma$ indices, where $\sigma \leq \Delta$ and $v$ is $(\Delta,S)$-separated.
    \end{itemize}
    \end{defi}

The algorithm in \cite{pulse_stream_model} applies to a pair $(h,Qx_k)$ whenever it is $(\sigma,S,\Delta)$-structurally sparse. The following condition on the design of $Q$ guarantees that this will be the case for all values of $x_k$.

\begin{defi}
For a matrix $Q \in \R^{m \times n}$, let $q_i$ denote the $i$-th row of $Q$, and let
$$
I_Q:=\{i\in\{1,\ldots,m\}: q_i\neq 0\}
$$
denote the set of indices of its nonzero rows.

We say that $Q$ satisfies the $(\Delta,S)$-structure if $\abs{I_Q} \leq S$ and writing 
$$
 I_Q=\{i_1,\ldots,i_s\}, \qquad i_1<\cdots<i_s,
$$
we have
$$
i_{k+1}-i_k>\Delta
\qquad\text{for all }k\in\{1,\ldots,s\}.
$$
\end{defi}

Indeed, if we assume that $Q$ satisfies the $(\Delta,S)$-structure, and $h$ has its nonzero entries concentrated among the first $h$-zero indices, then $(h,Qx_k)$ is $(\sigma,S,\Delta)$-structured, and Eve can determine $Qx_k$ modulo a scalar from a single observation $h*Qx_k$. A formal statement is as follows.

\begin{figure}
    \centering
    \begin{tikzpicture}[scale=0.45] 
\node at (2,-1.2) {$u$}; \draw[gray!50] (0,0) -- (3,0); \draw[very thick,gray] (0.2,0) .. controls (0.35,0.4) and (0.45,0.8) .. (0.65,0.6) .. controls (0.8,0.45) and (0.9,0.15) .. (1.0,0.5) .. controls (1.1,0.9) and (1.25,0.4) .. (1.45,0.7) .. controls (1.6,0.9) and (1.75,0.25) .. (1.9,0); \node at (3.8,0.4) {$*$}; 
\node at (6.5,-1.2) {$v$}; \draw[gray!50] (4.5,0) -- (8.5,0); \draw[red,very thick] (5.0,0) -- (5.0,1.2); \fill[red] (5.0,1.2) circle (2pt); \draw[orange!80!black,very thick] (6.4,0) -- (6.4,-0.6); \fill[orange!80!black] (6.4,-0.6) circle (2pt); \draw[green!60!black,very thick] (7.8,0) -- (7.8,1.6); \fill[green!60!black] (7.8,1.6) circle (2pt);
\node at (15.2,-1.2) {$u*v$}; \draw[gray!50] (10.5,0) -- (20.5,0); 
\draw[very thick,red] (11.0,0) .. controls (11.15,0.48) and (11.25,0.96) .. (11.45,0.72) .. controls (11.60,0.54) and (11.70,0.18) .. (11.80,0.60) .. controls (11.90,1.08) and (12.05,0.48) .. (12.25,0.84) .. controls (12.40,1.08) and (12.55,0.30) .. (12.70,0); 
\draw[very thick,orange!80!black] (14.0,0) .. controls (14.15,-0.24) and (14.25,-0.48) .. (14.45,-0.36) .. controls (14.60,-0.27) and (14.70,-0.09) .. (14.80,-0.30) .. controls (14.90,-0.54) and (15.05,-0.24) .. (15.25,-0.42) .. controls (15.40,-0.54) and (15.55,-0.15) .. (15.70,0);
\draw[very thick,green!60!black] (17.0,0) .. controls (17.15,0.64) and (17.25,1.28) .. (17.45,0.96) .. controls (17.60,0.72) and (17.70,0.24) .. (17.80,0.80) .. controls (17.90,1.44) and (18.05,0.64) .. (18.25,1.12) .. controls (18.40,1.44) and (18.55,0.40) .. (18.70,0);
\node[red] at (11.9,1.5) {$v(i_1)u$}; \node[orange!80!black] at (14.9,0.9) {$v(i_2)u$}; \node[green!60!black] at (17.9,2.0) {$v(i_3)u$}; 

\node at (9.5,0.4) {$=$};
\end{tikzpicture}
    \caption{Illustration of the separation property used in \autoref{theo:no_go_case}. Note that if the support of $v$ is not sufficiently separated, the copies of $u$ in $u*v$ will start overlapping each other, making the algorithm not applicable.}
    \label{fig:separation}
\end{figure}

\begin{theo}\label{theo:no_go_case}
Suppose $Q$ satisfies the $(\Delta,S)$-structure, and that 
$$
\supp(h)=\{1,\ldots,\sigma\}
    \qquad\text{for some }\sigma\leq\Delta,
$$
meaning $h$ has all nonzero values among its first $\sigma \leq \Delta$ entries. Further suppose that the norm of $h$ is a known positive constant.
Then Eve learns the mapping $\Phi([Q])$ in \autoref{lem:PhiMap} from a single observation of each 
$$
y_k = h*(Qx_k), \qquad k\in\{1,\ldots,M\}.
$$
\end{theo}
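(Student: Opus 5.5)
The plan is to show directly that a single ciphertext $y_k = h*(Qx_k)$ determines $[Qx_k]$, by reading off a separated copy of $h$ as in \autoref{fig:separation}. Doing this for every $k\in\{1,\ldots,M\}$ gives the whole tuple $\Phi([Q])=([Qx_1],\ldots,[Qx_M])$. Write $w_k = Qx_k$. Since $(w_k)_i = \langle q_i, x_k\rangle$ up to conjugation conventions, we have $\supp(w_k)\subseteq I_Q$. Hence the $(\Delta,S)$-structure of $Q$ implies $w_k\in\mathcal M(\Delta,S)$, and the pair $(h,w_k)$ is $(\sigma,S,\Delta)$-structurally sparse. Importantly, $I_Q$ itself is unknown to Eve, so the argument must use only $y_k$, $\sigma$ and $\|h\|_2$.

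\textbf{Step 1 (block decomposition).} Expand
$$
y_k=\sum_{i\in I_Q}(w_k)_i\,(h \text{ circularly shifted by } i-1).
$$
The shifted copy indexed by $i$ is supported in $\{i,\ldots,i+\sigma-1\}$. The gaps between consecutive elements of $I_Q$ exceed $\Delta\ge\sigma$, so these supports are pairwise disjoint. I also intend to read the gap condition cyclically, including the wrap-around gap from $i_s$ back to $i_1$; this is needed because the convolution is circular. Consequently, for each $i\in I_Q$ the restriction of $y_k$ to $\{i,\ldots,i+\sigma-1\}$ equals $(w_k)_i\,h|_{\{1,\ldots,\sigma\}}$ exactly, with no overlap from other copies.

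\textbf{Step 2 (locating the blocks).} This is the main obstacle. Eve must identify the block starts $i$ without knowing $I_Q$, and some $(w_k)_i$ may vanish. The tool is the assumption $\supp(h)=\{1,\ldots,\sigma\}$ with equality: it gives $h_1\neq 0$ and $h_\sigma\neq 0$. Therefore every copy with $(w_k)_i\neq0$ produces a nonzero entry exactly at position $i$ and at position $i+\sigma-1$. Copies with $(w_k)_i=0$ contribute nothing and are consistent with $(w_k)_i=0$. I would parse $\supp(y_k)$ greedily, scanning from the first nonzero index: each nonzero block starts at a nonzero index $j$ and occupies exactly $j,\ldots,j+\sigma-1$. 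After that block the next nonzero entry is at least $\Delta+1-\sigma+1>0$ positions away, so successive blocks do not merge. Some care is needed to fix the cyclic starting point, for example by starting the scan after a run of at least one zero. The decomposition is then unique, which yields the set $J_k=\supp(w_k)$ together with the blocks $b_j = y_k|_{\{j,\ldots,j+\sigma-1\}} = (w_k)_j h'$, where $h'=h|_{\{1,\ldots,\sigma\}}$.

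\textbf{Step 3 (recovering $[w_k]$).} If $J_k=\emptyset$, then $w_k=0$ and $[w_k]$ is known. Otherwise, use the known value $c=\|h\|_2>0$ to get $|(w_k)_j| = \|b_j\|_2/c$. Fix $j_0\in J_k$ and set $\tilde h = b_{j_0}/\|b_{j_0}\|_2\cdot c$, which equals $\theta h'$ for the unknown unit scalar $\theta = \overline{\mathrm{sgn}((w_k)_{j_0})}$ (up to the convention for $\mathrm{sgn}$). Then define $\tilde w_j = \langle \tilde h, b_j\rangle/c^2$ for $j\in J_k$, taking the inner product linear in the second argument, and $\tilde w_j = 0$ otherwise. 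This gives $\tilde w = \bar\theta\, w_k$. The key point is that $\theta$ is common to all $j$ within the same sample $y_k$, so $[\tilde w]=[w_k]$. Collecting these classes over $k$ gives $\Phi([Q])$.

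Note that the known norm of $h$ is used only to fix the moduli $|(w_k)_j|$. Without it, $w_k$ would be recoverable only up to a positive real scaling, which is finer than the $\mathbb S^1$-ambiguity in $\Phi$.
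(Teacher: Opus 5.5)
Your proof is correct and follows the paper's argument: the separation makes $y_k$ a sum of disjointly supported shifted copies $(Qx_k)_i\,h$, which fix $Qx_k$ up to one nonzero scalar, and the known $\|h\|_2$ reduces that scalar to $\mathbb{S}^1$. Your version is in fact more careful than the paper's proof, which neither says how the blocks are located without knowing $I_Q$ nor imposes the wrap-around gap that circular convolution genuinely needs. With only the gaps for $j\le t-1$ the claim fails: for instance $m=4$, $\sigma=\Delta=2$, $Qx_k=(1,0,0,2)^T$, $h=(1,1,0,0)^T$ give the same $y_k=(3,1,0,2)^T$ as $Q'x_k=\sqrt{5/2}\,(1,0,0,1)^T$, $h'=\sqrt{2/5}\,(2,1,0,0)^T$, so your cyclic reading of the definition is the right one.
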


\begin{proof}
Let us fix $k \in \{1, \dots,M\}.$ Since $Q$ satisfies the $(\Delta,S)$-structure, 
$$
\supp(Qx_k) \subseteq I_Q.
$$
Indeed, if $q_i=0,$ then $(Qx_k)_i=q_i x_k=0.$
Thus, $Qx_k$ has at most $S$ nonzero entries, whose indices are separated by more than $\Delta$, and thus,
$$Qx_k \in \mathcal{M}(\Delta,S),$$ 
see also \cite[Lemma 2]{matilda_thesis}.

If $Qx_k=0$, then $y_k=0$ and Eve immediately recovers $Qx_k.$
Otherwise, write
$$
\supp(Qx_k) = \{i_1, \dots, i_t\}, \qquad  i_1 < \cdots < i_t,
$$
where $ t \leq S$ and
$$
i_{j+1}-i_j > \Delta \qquad\text{for all }j\in\{1,\ldots,t-1\}.
$$
By assumption 
$$
\supp(h)=\{1,\ldots,\sigma\}, \qquad \sigma \leq \Delta.
$$
Hence, $(h,Qx_k)$ is $(\sigma,S,\Delta)$-structurally sparse.
In particular, the locations of the blocks determine $\supp(Qx_k)$, while their relative scalings determine the ratios between the corresponding nonzero entries of $Qx_k$. Indeed, since the nonzero entries of $h$ are concentrated in its first $\sigma$ indices, convolving with $Qx_k$ spreads each nonzero entry $Qx_k(i_j)$ into a block of length $\sigma$. Taking the block corresponding to $i_1$ as a reference, the block corresponding to $i_j$ determines
$$
    \frac{Qx_k(i_j)}{Qx_k(i_1)} \qquad\text{for all }j\in\{1,\ldots,t\}.
$$
Hence, $Qx_k$ is determined up to a common nonzero scalar.
Now, since the norm of $h$ is a known constant, this scalar is even in $\mathbb{S}^1$ and Eve therefore recovers the projective class
$$ [Qx_k].$$
Repeating the argument for every $k\in\{1,\ldots,M\}$, Eve obtains all projective measurements $[Qx_k]$. By \autoref{lem:PhiMap}, these determine $\Phi([Q])$. Hence, Eve learns $\Phi([Q])$ from a single observation of each $y_k$.

\end{proof}

We emphasize that \autoref{theo:no_go_case} does not say that recovering $\Phi([Q])$ from finitely many samples is easy in general. Instead, it only identifies specific choices of $Q$ and $\calD$ for which a single observation per plaintext suffices to recover $\Phi([Q])$, and Eve hence has a clear path to attacking their communication. While it is easy that such choices can be avoided, the result shows that the hardness of recovering the codebook from finitely many observations will depend on both the filter distribution $\calD$, as well as the structure of the measurement matrix $Q$.

\subsection{A Covariance-based Attack} \label{sec:covariance_attack}

Let us now present a simple algorithm for estimating $\Phi([Q])$ from finitely many samples, based on covariance estimation. 
The algorithm is again based on the Fourier convolution theorem, which says that convolution in the time domain corresponds to pointwise multiplication in the Fourier domain:
$$
\widehat{u*v}=\widehat{u}\odot\widehat{v}.
$$

Letting
$$
w_k=Qx_k,
$$
we have
$$
\widehat{y}_k=\widehat{h}\odot \widehat{w}_k=\Diag(\widehat{w}_k)\widehat{h}.
$$
If we consider $w_k=Qx_k$ as fixed, the covariance matrices $\Sigma_{\widehat y_k}$ of ${\widehat{y_k}}$ and $\Sigma_{\widehat h}$ of $\widehat{h}$, respectively, satisfy
\begin{equation*}
    \Sigma_{\widehat y_k}= \Diag(\widehat{w}_k) \Sigma_{\widehat h}{\Diag(\widehat{w}_k)}^*.
\end{equation*}

Equivalently,
$$
\mathcal{F}(\Sigma_{y_k})=
    \Diag(\mathcal FQx_k)
    \bigl(\mathcal F(\Sigma_h)\bigr)
    \Diag(\mathcal FQx_k)^*.
$$
Using our convention for the Fourier transform $\calF(A)\widehat{x}=\widehat{Ax}$, this can also be written as
$$
\mathcal{F}(\Sigma_{y_k})= \calF(\Sigma_h)\odot\calF(w_kw_k^*).$$

Hence, letting $\oslash$ denote the element-wise division, whenever $\mathcal{F}(\Sigma_h)$ has no vanishing entries, we obtain
$$
(\mathcal{F}\Sigma_y) \oslash (\mathcal{F}\Sigma_h) = \mathcal{F}(w_k w_k^*).
$$
Taking the inverse Fourier transform then isolates the rank-one matrix $w_k w_k^*$:
\begin{equation*}
\mathcal{F}^{-1}
\left(
(\mathcal{F}\Sigma_y) \oslash (\mathcal{F}\Sigma_h)
\right)= w_kw_k^*.
\end{equation*}
By extracting the leading eigenvector of this matrix, $w_k$ is then recovered up to a global phase. 

Now, the covariance $\Sigma_h$ is known from the public filter distribution $\calD$, and $\Sigma_{y_k}$ is simple to estimate from repeated encryptions of the same plaintext $x_k$. Hence, as long as $\mathcal{F}\Sigma_h$ has no vanishing entries, Eve will be able to estimate $Qx_k $ modulo unit scalars from finitely many observations of $y_k$. The Pseudocode for the entire algorithm is given in \Autoref{algo}.

\begin{algorithm}
\caption{Covariance-based recovery of $Qx_k$}
\label{algo}
\begin{algorithmic}[1]
\Require $r$ ciphertext samples $y_{k_1}, \ldots, y_{k_r}$ with $y_{k,p}=h_{k,p}*Qx_k$, and known covariance $\Sigma_h$ 
\Ensure Estimate of $Qx_k$ up to a global phase

\State Compute the sample covariance matrix of $y$\footnote{Note that we here can assume that $h$, and therefore $y$, has zero mean, due to the symmetry assumption}:
\[
    \overline{\Sigma}_{y_k} = \frac{1}{r} \sum_{p=1}^r y_{k,p}y_{k,p}^*
\]

\State Transform the covariance matrices to the Fourier domain:
$$
\overline{C}_{y_k} = \mathcal F \overline{\Sigma}_{y_k} , \qquad C_h = \mathcal F \Sigma_h.
$$

\State Compute the element-wise ratio in the Fourier domain:
\[
    M = (\mathcal{F}\overline{\Sigma}_y)\oslash (\mathcal{F}\Sigma_h)
\]
\State Recover the rank-1 matrix via the inverse Fourier transform:
\[
    R_k = \mathcal{F}^{-1}M \approx (Qx_k) (Qx_k)^*
\]
\State Compute the SVD of $R_k$, extract the leading singular vector and let $(\overline{\lambda}_1,\overline{u}_1)$ be its leading eigenpair.
\State \textbf{Return}
$$
    \overline{Qx_k}
    =
    \sqrt{\overline{\lambda}_1}\,\overline{u}_1.
$$
\end{algorithmic}
\end{algorithm}

It is well known that $\sim \epsilon^{-2}$ samples are needed to obtain an estimate of a covariance with error less than $\epsilon$. Concretely, we have the following result.

\begin{theo}\label{theo:sample_complex_covy} 
Let 
$$
K_{\Sigma} = c_4 \frac{8}{3} \Vert w_k\Vert_2^2 \Vert \Sigma_h \Vert,
$$
where $c_4>0$ is a constant. For $\epsilon < K_{\Sigma}$ and $\delta \in (0,1)$, if the sample size is at least
$$
r \geq \frac{K_{\Sigma}^2}{\epsilon^2}\ln\left(\frac{2m^2}{\delta}\right).
$$
then any element of $\overline{\Sigma}_{y}$ is within $\epsilon$ of its true value with probability $(1-\delta)$.
\end{theo}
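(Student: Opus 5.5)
The plan is an entrywise Bernstein bound for sub-exponential variables, followed by a union bound over the $m^2$ entries of $\overline{\Sigma}_{y}$. The constants $c_4$ and $8/3$ suggest the authors use a Bernstein inequality with a sub-exponential (or $\psi_1$) norm bound and a Gaussian-type tail for small deviations. I will therefore take $h$ to be sub-Gaussian, e.g.\ of the form \eqref{eq:GaussianFill}, with $\|\langle h,v\rangle\|_{\psi_2}\lesssim \|\Sigma_h\|^{1/2}\|v\|_2$; the constant $c_4$ absorbs this.

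\textbf{Step 1 (write each entry as an average).} Fix indices $(a,b)$. The entry is
$$
(\overline{\Sigma}_{y})_{ab}=\frac1r\sum_{p=1}^r (y_{k,p})_a\overline{(y_{k,p})_b},
$$
an average of $r$ i.i.d.\ terms with mean $(\Sigma_y)_{ab}$. Since $y=h*w_k$, we have $y_a=\langle h, c_a\rangle$, where $c_a$ is a cyclically shifted and reversed copy of $w_k$. Hence $\|c_a\|_2=\|w_k\|_2$ for every $a$, and both $y_a$ and $y_b$ are sub-Gaussian with $\psi_2$-norm at most $C\|\Sigma_h\|^{1/2}\|w_k\|_2$.

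\textbf{Step 2 (sub-exponential control).} The product of two sub-Gaussian variables is sub-exponential, with $\|y_a\overline{y_b}\|_{\psi_1}\le \|y_a\|_{\psi_2}\|y_b\|_{\psi_2}\lesssim \|w_k\|_2^2\|\Sigma_h\|$. Centering changes this by at most a constant factor. If the entries are complex, I would treat the real and imaginary parts separately; this costs only constants, absorbed into $c_4$ or into the factor $2$ inside the logarithm.

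\textbf{Step 3 (Bernstein bound and union bound).} Bernstein's inequality for averages of centered sub-exponential variables gives
$$
P\bigl(|(\overline{\Sigma}_{y})_{ab}-(\Sigma_y)_{ab}|>\epsilon\bigr)\le 2\exp\Bigl(-r\min\bigl(\tfrac{\epsilon^2}{K_\Sigma^2},\tfrac{\epsilon}{K_\Sigma}\bigr)\Bigr).
$$
Because $\epsilon<K_\Sigma$, the quadratic term is the active one in the minimum. A union bound over the $m^2$ pairs $(a,b)$ gives a total failure probability of at most $2m^2\exp(-r\epsilon^2/K_\Sigma^2)$. This is at most $\delta$ exactly when $r\ge \frac{K_\Sigma^2}{\epsilon^2}\ln(2m^2/\delta)$, which is the claimed sample size.

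\textbf{Main obstacle.} The delicate step is Step 2: obtaining the uniform scale $\|w_k\|_2^2\|\Sigma_h\|$ for every entry. This needs the sub-Gaussian assumption on $h$ and the observation that every shift $c_a$ has the same norm as $w_k$. It also requires choosing a Bernstein variant whose constants match the displayed form with $8/3$. I would first try the moment form of Bernstein's inequality, $E|X|^q\le \frac{q!}{2}\sigma^2L^{q-2}$, where constants like $\frac{8}{3}$ typically appear. Failing that, I would simply absorb all constants into $c_4$.
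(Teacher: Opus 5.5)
Your proposal is correct and follows the same route the paper indicates (the paper defers the details to \cite{matilda_thesis}): each entry of $\overline{\Sigma}_{y}$ is an average of i.i.d.\ sub-exponential products $y_a\overline{y_b}$ whose $\psi_1$-norm is of order $\|w_k\|_2^2\|\Sigma_h\|$; the sub-exponential Bernstein inequality, in the quadratic regime guaranteed by $\epsilon<K_\Sigma$, controls each entry; and a union bound over the $m^2$ entries produces the $\ln(2m^2/\delta)$. The factor $\tfrac{8}{3}$ most likely comes from $\|g\|_{\psi_2}^2=\tfrac{8}{3}$ for a standard real Gaussian $g$, entering in your Step 2 via $\|y_a\|_{\psi_2}\le\sqrt{8/3}\,\|w_k\|_2\|\Sigma_h\|^{1/2}$ for Gaussian $h$, rather than from the moment form of Bernstein; note also that for sparsely masked filters of the form \eqref{eq:GaussianFill} the ratio $\|\langle h,v\rangle\|_{\psi_2}/(\|\Sigma_h\|^{1/2}\|v\|_2)$ can be of order $\sqrt{m/\sigma}$ for $\sigma$-sparse masks, so your $c_4$ must be allowed to depend on $\mathcal{D}$ (and the claim fails for any fixed absolute $c_4$ once the mask is sparse enough).
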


The proof of \autoref{theo:sample_complex_covy} is derived in detail in \cite{matilda_thesis}. Its proof relies on the subexponential Bernstein inequality (see for example \cite{Bernstein_ineq}), applicable since each entry of $\overline{\Sigma}_y$ is a sum of subexponential random variables.

This algorithm is simple, but its existence shows that as long as $\mathcal{F}\Sigma_h$ has no vanishing entry, Eve can in principle obtain $[Qx_k]$ from finitely many ciphertext samples. Repeating the procedure for the plaintexts $(x_k)_{k \in [M]}$ gives the measurements required to recover $\Phi([Q])$ as in \autoref{lem:PhiMap}. The practicality of the attack boils down to how many samples she needs, and this boils down to how high fidelity the estimate of $\Phi([Q])$ needs to be. Note that the theorem predicts that the recovery may be quite heavy in $r$ -- to reach an error of $\epsilon$, $\sim \epsilon^{-2}$ samples are needed.  We investigate this empirically in \Autoref{sec:numerics}.

Equally important to note is however that the algorithm is easy to defend against: First, the bound in \autoref{theo:sample_complex_covy} immediately shows that by increasing $\|\Sigma_h\|$, more samples are need to obtain the same error. Secondly, it fails completely when $\mathcal{F}\Sigma_h$ contains zero values. It turns out this is the case  for several natural choices of the filter distribution, as we show next.

\begin{theo}\label{theo:main_sparse}
Consider the model $h=z\odot b$, as in \autoref{eq:GaussianFill}. Suppose that either
\begin{enumerate}[(a)]
    \item the entries of $b$ are i.i.d. centered, and the entries of $z$ have equal second moments,
    \item the entries of $z$ are i.i.d. centered,  and the entries of $b$ have equal second moments. 
\end{enumerate}  
Then $\Sigma_h$ is a scalar multiple of the identity. Therefore $\mathcal{F}\Sigma_h$ is also a scalar multiple of the identity, and thus, has no vanishing off-diagonal entries. Hence \Autoref{algo} cannot recover $Qx_k$ using the covariance-based method.
\end{theo}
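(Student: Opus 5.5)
The proof is a direct computation of $\Sigma_h$ under the model \eqref{eq:GaussianFill}, followed by the observation that conjugating a scalar matrix by the Fourier transform leaves it scalar. Throughout I take $z$ and $b$ to be independent, which is implicit in the model. Note that $h=z\odot b$ has zero mean in both cases: in (a) because $\mathbb E b=0$, in (b) because $\mathbb E z=0$. Hence $\Sigma_h=\mathbb E[hh^*]$, with entries
$$
(\Sigma_h)_{ij}=\mathbb E\bigl[z_i\overline{z_j}\bigr]\,\mathbb E\bigl[b_i\overline{b_j}\bigr],
$$
where the factorization uses the independence of $z$ and $b$.

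\emph{Case (a).} For $i\neq j$, independence and centering of the $b$-entries give $\mathbb E[b_i\overline{b_j}]=\mathbb E b_i\,\overline{\mathbb E b_j}=0$, so every off-diagonal entry vanishes. On the diagonal we get $(\Sigma_h)_{ii}=\mathbb E|z_i|^2\,\mathbb E|b_i|^2$. The first factor is the same for all $i$ by assumption, and the second is the same because the $b_i$ are identically distributed. Hence $\Sigma_h=cI$.

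\emph{Case (b).} The roles are swapped. The factor $\mathbb E[z_i\overline{z_j}]$ vanishes for $i\neq j$ by independence and centering of the $z$-entries. The diagonal entries $\mathbb E|z_i|^2\,\mathbb E|b_i|^2$ are constant because the $z_i$ are i.i.d. and the $b_i$ have equal second moments. (Strictly, a $\{0,1\}$-valued $z$ cannot be centered unless $z=0$, so here I would read (b) as allowing general real or complex $z$; the computation does not use the support constraint.)

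\emph{Fourier domain.} With the paper's convention $\mathcal F(A)\widehat x=\widehat{Ax}$, we have $\mathcal F(A)=FAF^{-1}$. Therefore $\mathcal F(cI)=cFF^{-1}=cI$. This holds for any normalization of $F$, so unitarity is not even needed. Consequently $\mathcal F\Sigma_h$ has \emph{all} off-diagonal entries equal to zero. The phrase ``no vanishing off-diagonal entries'' in the statement should be read as ``vanishing off-diagonal entries''; that is what the conclusion requires.

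\emph{Failure of the algorithm.} The element-wise division $(\mathcal F\overline\Sigma_{y_k})\oslash(\mathcal F\Sigma_h)$ in \Autoref{algo} is then undefined at every off-diagonal position. Equivalently, the identity $\mathcal F(\Sigma_{y_k})=\mathcal F(\Sigma_h)\odot\mathcal F(w_kw_k^*)$ shows that $\Sigma_{y_k}$ depends only on the diagonal of $\mathcal F(w_kw_k^*)$, namely the magnitudes $|\widehat{w}_k|^2$. The relative phases, and hence $[Qx_k]$, cannot be extracted from second moments, exactly as in the uncoupled phase retrieval situation of \autoref{prop:gaussian}.

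There is no real obstacle in this argument. The only points requiring care are the independence between $z$ and $b$, the consistency of hypothesis (b) with the support model, and the interpretation of the statement's wording about vanishing entries.
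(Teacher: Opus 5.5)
Your proof is correct and is essentially the paper's argument. The paper reaches $\Sigma_h=\Sigma_b\odot\mathbb{E}[zz^T]+\mu_b\mu_b^T\odot\Sigma_z$ through the law of total covariance and then observes that in each case one Hadamard factor is diagonal with constant diagonal. You instead use $\mathbb{E}h=0$ and independence to write $\Sigma_h=\mathbb{E}[zz^*]\odot\mathbb{E}[bb^*]$ directly, and read off the same conclusion.

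Your direct second-moment form matches hypothesis (b) more exactly than the paper's case-(b) sentence. That sentence appeals to equal diagonal entries of $\Sigma_b$ and drops the $\mu_b\mu_b^T\odot\Sigma_z$ term, which is harmless only because $\mu_b=0$ in the Gaussian model. Your remarks on the ``no vanishing'' typo and on (b) being degenerate for $\{0,1\}$-valued $z$ are also correct.
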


\begin{proof}
By the law of total covariance \cite{law_of_totalcov},
\begin{equation}\label{eq:Cov_h_basic}
\Cov(h) = \mathbb{E}[\Cov(h \vert z)] + \Cov(\mathbb{E}[h \vert z]).
\end{equation}

From \eqref{eq:GaussianFill}, 
$$
\Cov(h \vert z) = \Diag(z) \Sigma_b \Diag(z).
$$
Taking expectations pointwise gives 
$$
\mathbb{E}[\Cov(h \vert z)] = \Sigma_{b} \odot \mathbb{E}[zz^T].$$
Analogously, 
$$
\Cov(\mathbb{E}[h \vert z]) = \Diag(\mu_b) \Sigma_z \Diag(\mu_b) = \mu_b \mu_b^T \odot \Sigma_z.
$$ Substituting into \eqref{eq:Cov_h_basic}, gives us

\begin{equation}\label{Eq:sigma_h}
\Sigma_h = \Sigma_{b} \odot \mathbb{E}[zz^T] + \mu_b \mu_b^T \odot \Sigma_z;
\end{equation}
in particular, $\mu_b = 0$ reduces this to $\Sigma_h = \Sigma_b \odot \mathbb{E}[zz^T]$. 

Now, assume (a). Since the entries of $b$ are centered i.i.d., $\Sigma_b$ is a scalar multiple of the identity matrix. Also, by the equal moments assumptions,  the diagonal entries of $\mathbb{E}(zz^T)$ are also equal. Equation \eqref{Eq:sigma_h} therefore shows that $\Sigma_h$ also is a scalar multiple of the identity, and consequently also $\mathcal{F}\Sigma_h$. 

The argument is similar in case (b) -- here, due to the i.i.d centered distributions of the $z_i$, $\mathbb{E}[zz^T]=\Sigma_z$ is a multiple of the identity matrix. Since $\Sigma_b$ in this case have equal diagonal entries, $\Sigma_h$ will again be a multiple of the identity matrix.
\end{proof}

\begin{rem} While \autoref{theo:main_sparse} yields a way of defending against \Autoref{algo}, one should note that  
we do not claim Eve can't find another way to estimate $Qx_k$. To analyse whether alternative finite-sample attacks are possible for these filter distributions is an interesting question, but beyond the scope of this work.
\end{rem}

\begin{rem}
    In \cite{matilda_thesis}, it is shown (specifically in Lemmas 11 and 13) that the algorithm can also be rendered inapplicable when $\mu_b \neq 0$. However, in this case, the phase symmetry assumption is violated, which is why we do not present it in detail here.
\end{rem}

\section{Numerical experiments} \label{sec:numerics}
In light of the discussion at the end of \Autoref{sec:security_analysis}, it is important to empirically test whether $Q$ can be recovered from channel observations. We do so by considering two settings in the experiments. In the first, the idealized infinite sample setting, Eve is given $\Phi([Q])$ directly. This allows us to compare practical recovery with the information-theoretic identifiability results developed in \Autoref{sec:security_analysis}. 
In the second,  the finite-sample setting,  $\Phi([Q])$ must first be estimated from ciphertext observations using the covariance-based attack described in \Autoref{sec:covariance_attack}.

In all experiments, we draw random $s$-sparse vectors $x_k$ by first drawing their support uniformly at random among all supports of size $s$, and then populating the selected coordinates with i.i.d. Gaussians, for different values of $s$. We specify the mechanism for generating $Q$ later, since it differs between experiment runs.

We are quite lenient in deeming a recovery successful: we do so already when the relative error modulo global phase satifies
$$
\min_{\abs{\theta}=1} \frac{\|Q - \theta Q_{\text{recovered}}\|}{{\|Q\|}} < 0.1.
$$
We do so to take into account the fact that, any practical scenario, involves a quantization step, meaning that a rough estimate of $Q$ should be likely to suffice for the practical resolution of the blind deconvolution afterwards. Note that this is, if anything, in \emph{Eve's favor}.

\subsection{Recovery from Infinite Samples}
In the infinite sampling regime, we are tasked with recovering $Q$ directly from the measurements 
$$
\Phi([Q])\in (\C^m/\mathbb S^1)^M.
$$
Our approach to do this is simple: we try fit the values $\Phi(Q)_k= Qx_k$ to the values $b_k$, taking the global phase ambiguity for each vector into account. That is, we minimize the loss function 
\[
\mathcal{L}(Q) = \sum_k \min_{\abs{\theta_k}=1}\left\| \Phi(Q)_k - \theta_k b_k \right\|_2^2.
\]
Note that $\mathcal{L}$ is well defined independently of the choice of the representative $b_k$, and can hence be calculated also when each $b_k$ is only known modulo a global unit scalar. 

To minimize $\calL$, we use the L-BFGS-B optimization algorithm (as implemented in numpy) \cite{liu1989limited}, a quasi-Newton method designed for high-dimensional optimization problems with variable constraints. Unlike full-memory quasi-Newton methods, which require storing and updating the entire Hessian matrix, L-BFGS-B uses a limited-memory approach, retaining only a small subset of gradient and variable updates, making it a good choice for solving large-scale optimization problems. Its effectiveness in addressing similar relative phase retrieval problems has been demonstrated in prior research, which justifies its selection for this study\cite{kornprobst2021phase,Ji2016phase}.

To make the experiments feasible, we consider relatively small parameter values: $(n=50,m=5)$ and $(n=100,m=3)$. For different values of messages $M$ and samples per message $r$ we draw $500$ instances of $Q$ and $(x_k)$, where  $Q$ is drawn as a Gaussian random matrix. We apply L-BFGS-B to minimize $\calL$, and record the error.

The results are plotted in \Autoref{fig:infinite_r}\footnote{
    The results differ slightly from the corresponding experiment in the  conference version \cite{flinth2026bilinear} of this article. The reason is an improvement of the code this time around -- the previous experiments used a derivative-free version of the L-BFGS-B algorithm, whereas we provide it symbolically this time around. This has both sped up experiments, and improved recovery slightly. }. There are two clear observations to be made: First, the recovery does get harder with smaller $s$ -- and in particular impossible for $s=1$, but the effect saturates quite early. This part of \autoref{th:main} is hence confirmed. It however also shows that it does not tell the full story. \autoref{th:main} in particular implies that $\Phi$ cannot be injective for $M<4n$, but recovery occurs \emph{much earlier} than that in both settings. As we already discussed, this does not contradict \autoref{th:main}, but it is sobering.

    \begin{figure}
    \centering
    \includegraphics[width=0.7\linewidth]{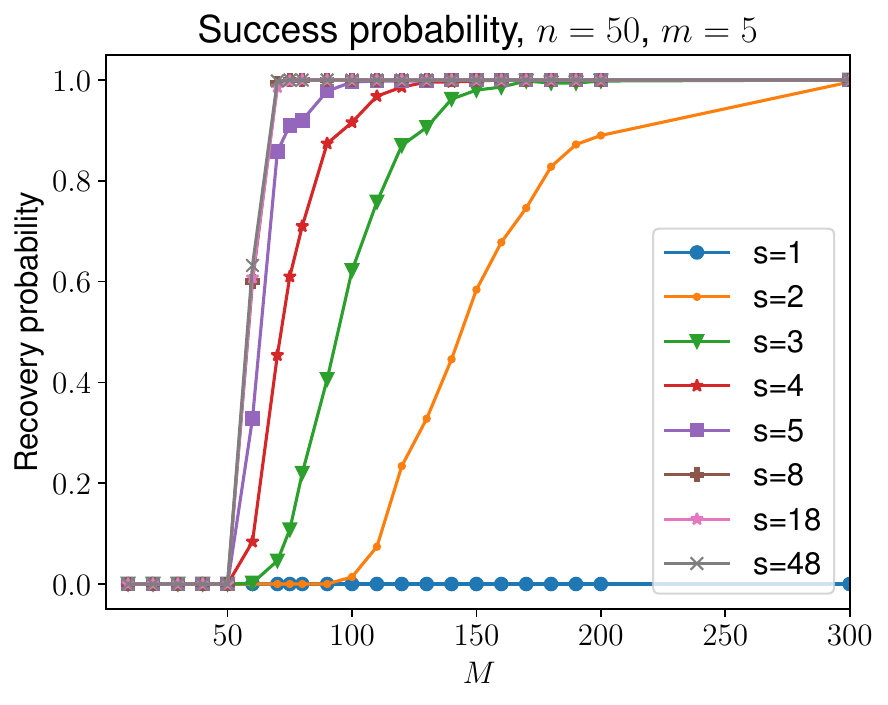} 
    
    \includegraphics[width=0.7\linewidth]{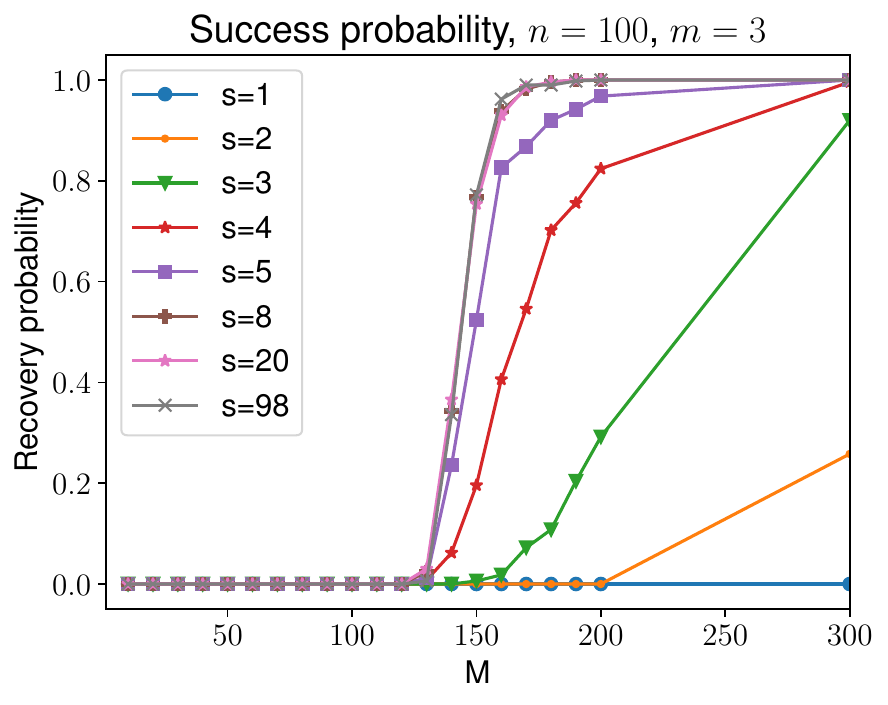}
    \caption{Success probabilities for $r=\infty$ and different $s$ and $M$. Top: $n=50$, $m=5$. Bottom: $n=100,m=3$.}
    \label{fig:infinite_r}
\end{figure}

    As we have discussed above, \autoref{th:PRvsPhi} and \autoref{th:main} do not rule out that there are particular classes of codebooks that can be recovered when  $(x_k)$ do not do phase retrieval, and that the Gaussian matrices very likely belong to these sets. Studying the proofs of these theorems, in particular that of \autoref{prop:maina1}, reveals two properties that might make recovery harder.
    \begin{enumerate}[(i)]
        \item Having only \emph{one nonzero row} -- the matrices used in the proof of \autoref{prop:maina1} have this property.
        \item Having sparse rows -- the matrices constructed in the proof of \autoref{th:main} have this property, and these matrices directly reveal vectors that are confounded by the $(x_k)$ \cite{bandeira2014saving}.
    \end{enumerate}
    This motivates us to test these two properties separately. 
    
    We thus test the recoverability of \emph{four} types of matrices: matrices with one sparse row ((i) and (ii)), many sparse rows (just (ii)), one dense row (just (i)) and many dense rows (neither (i) nor (ii)). 
    We here fix $s=3$, $m=3$ and vary $n$ between $25$ and $200$ and $M$ between $100$ and $10000$. The rows that we generate for matrices of type (ii) only have two non-zero entries, and the indices of the rows for matrices of type $(i)$ are drawn uniformly. Any non-zero entries are generated as i.i.d Gaussians.

    In \Autoref{fig:recovery_vs_n}, we plot the success rates of the experiments against the two values
    $$\frac{M}{4n} \qquad \text{and } \qquad \frac{M}{(n/s)^2},$$
    that correspond to linear and and quadratic scaling, respectively. We see that in the cases (both one and multiple) of \emph{sparse} rows, the recovery threshold fits much better to a logistic curve using (the logarithm) of $M/(n/s)^2$ as a feature than $M/(4n)$. This is also confirmed when we calculate the average distance between the measured and the (by the logistic regression) predicted probabilities as measured by KL-divergence on the data -- it is consistently lower for the fits based on the quadratic features for the sparse row codebooks, and lower for the linear feature fit for the others. Hence, it is first and foremost property (ii) that seems to be important to realize the quadratic growth with $n$. 
    
    We conclude that the quadratic scaling predicted by \autoref{th:main} does not seem to be realized for dense codebooks, but that it does emerge when the $Q$ are drawn to have sparse rows. Since our pipeline is dependent on the blind deconvolution problem being solvable, and that such guarantees almost always assume that the $Q$ is drawn according to a dense distribution, this puts an asterisk on our theoretical results. We will discuss this later, but let us first present the results for the finite sample setting.  
    
\begin{figure}
    \centering
    \fbox{\includegraphics[width=0.45\linewidth]{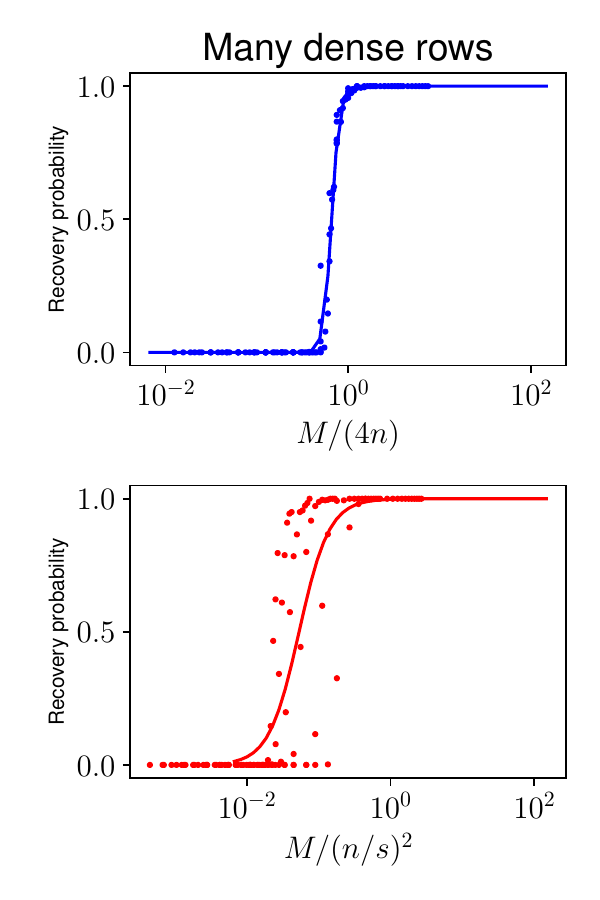}}\fbox{\includegraphics[width=.45\linewidth]{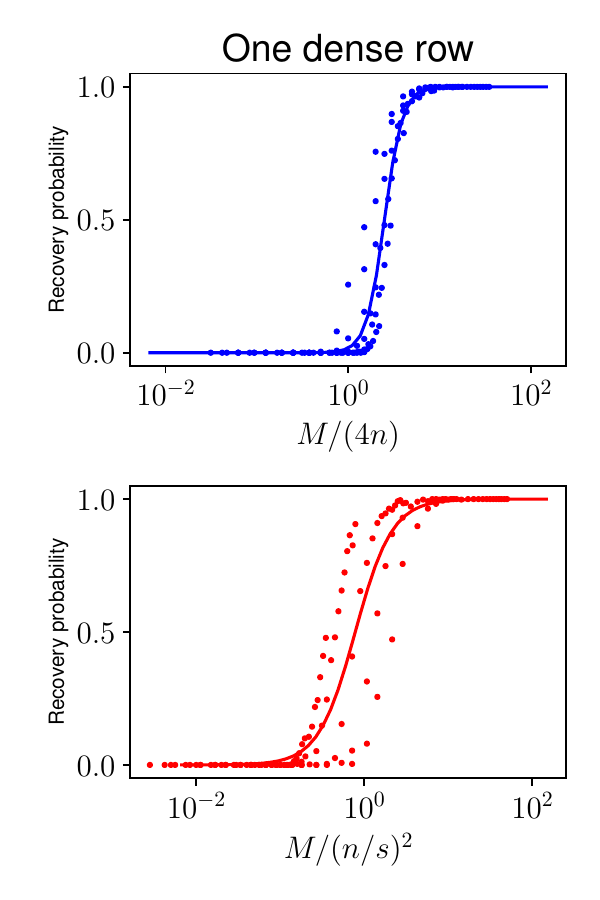}}

    \fbox{\includegraphics[width=0.45\linewidth]{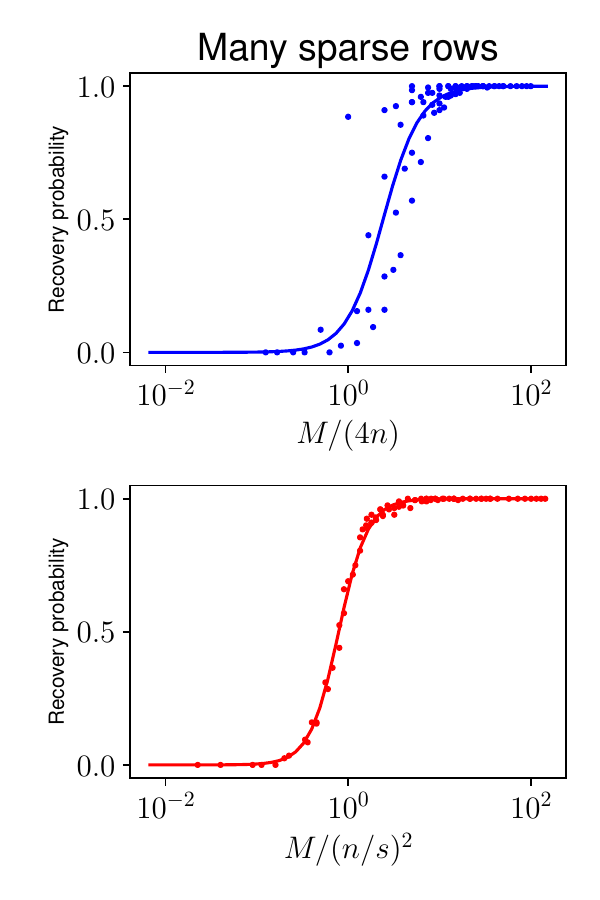}}\fbox{\includegraphics[width=.45\linewidth]{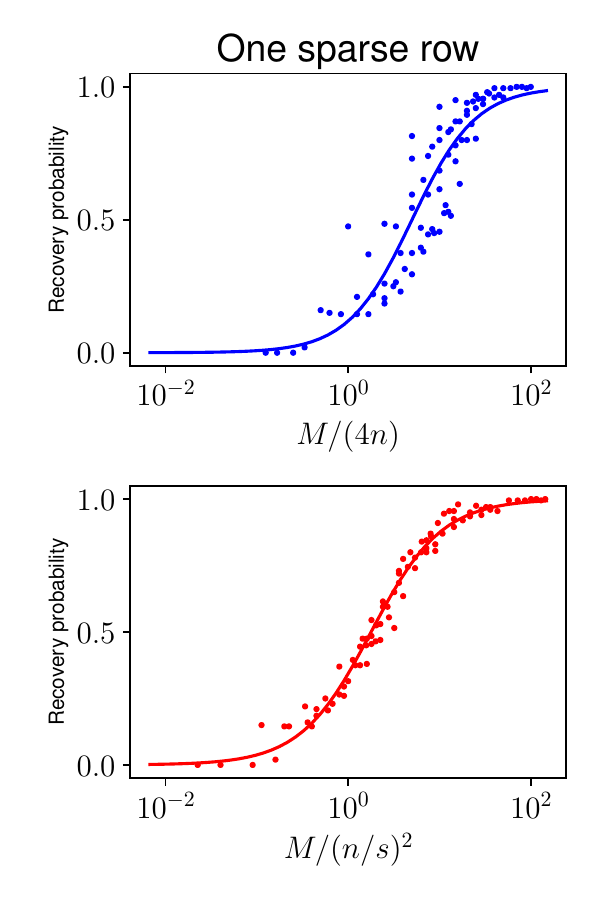}}
    \caption{Recovery for the four classes of $Q$. Every point in the scatter plots correspond to one set of experiments, and the curves are logistic regression fits. In the two cases with dense rows (above), the recovery is better explained by a linear scaling with $n$ of $M$, whereas it with sparse rows (below) is better explained by a quadratic scaling. }
    \label{fig:recovery_vs_n}
\end{figure}

\begin{table}
\centering
\begin{tabular}{l | c c}
        & Linear & Quadratic \\
\hline Many dense rows   &  $\mathbf{8.355\cdot 10^{-3}}$ & $1.706\cdot 10^{-1}$ \\
 One dense row &  $\mathbf{3.010\cdot 10^{-2}}$ & $6.604\cdot 10^{-2}$ \\
\hline Many sparse rows & $ 5.920\cdot 10^{-2}$ & $\mathbf{2.886\cdot 10^{-3}}$ \\
One sparse row &$ 4.297\cdot 10^{-2}$ & $\mathbf{7.407\cdot 10^{-3}}$ \\
\end{tabular}
\caption{Average KL-distance between predicted and measured probabilities for the different cases. Lower value indicates better fit. \label{table:quant_rate}}
\end{table}

\subsection{Retrieving the key from finite samples}
We now test the layer of security for our protocol given by the fact that $\Phi([Q])$  only is indirectly revealed through samples. 
We here concentrate entirely on the case of $Q$ being dense, since it is more practically relevant. We sample the $h$-vector from a distribution of type \eqref{eq:GaussianFill}, with $z$ having i.i.d. Bernouilli entries ($p=0.5$) and $b$ Gaussian, with diagonal covariance matrix, with diagonal $d=a/\|a\|_\infty$, where the entries of $a \in \C^m$ are absolute values of i.i.d. Gaussian. Note that the latter choice is to ensure that a situation like in \autoref{theo:main_sparse} does not occur. In other words, we make sure that Eve is in a position where \Autoref{algo} has a chance to succeed. When repeating the below experiments with the covariance of $b$ being the identity, the algorithm of course always fails.

\paragraph{\autoref{theo:sample_complex_covy}}First, we validate the $1/\sqrt{r}$-decay rate of the estimation of the vectors $Qx_k$  predicted by  \autoref{theo:sample_complex_covy}. For $n=50$, we plot the relative error (modulo global phase) for different values of $m$ and $r$ in \Autoref{fig:log}. As can be seen, the decay rate is very close to the predicted  $1/\sqrt{r}$-rate.

\paragraph{Full-pipeline recovery}
We now test the empirical recovery performance of the covariance-based attack. That is, given $r$ samples from $M$ different $x_k$,
$$
y_{k,p}=h_p*Qx_k, \qquad k=1, \ldots, M, \;, p =1, \ldots, r,
$$ 
we first use \Autoref{algo} to estimate the values of 
$$
[Qx_k] \in \C^m/\mathbb{S}^1, \qquad k=1, \ldots, M.
$$
We then use those estimates to recover $Q$ by minimizing $\calL$. We stay in the setting $n=50,m=3$, and test two sparsity levels -- $s=4$ and $s=8$ -- and a wide range of sample counts $r$. The results are plotted in \Autoref{fig:finite_r}, with the results for $r=\infty$ from above as a reference. The results reveal that the attack per se works -- and with a growing number of samples, the performance gets closer and closer to the infinite-sample performance.

\begin{figure}
\centering
    \includegraphics[width=.8\linewidth]{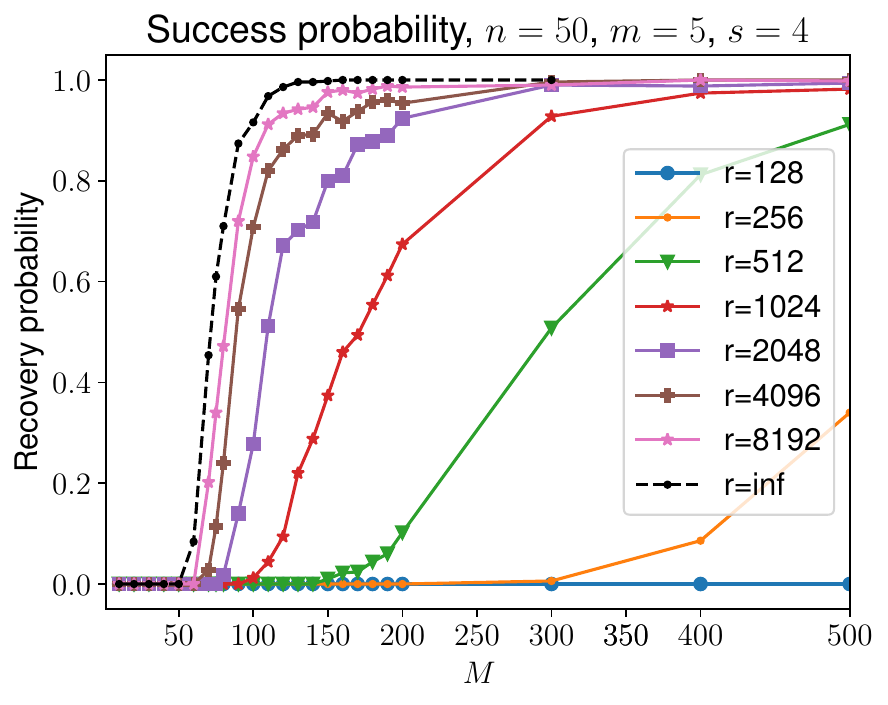} 
    
    \includegraphics[width=.8\linewidth]{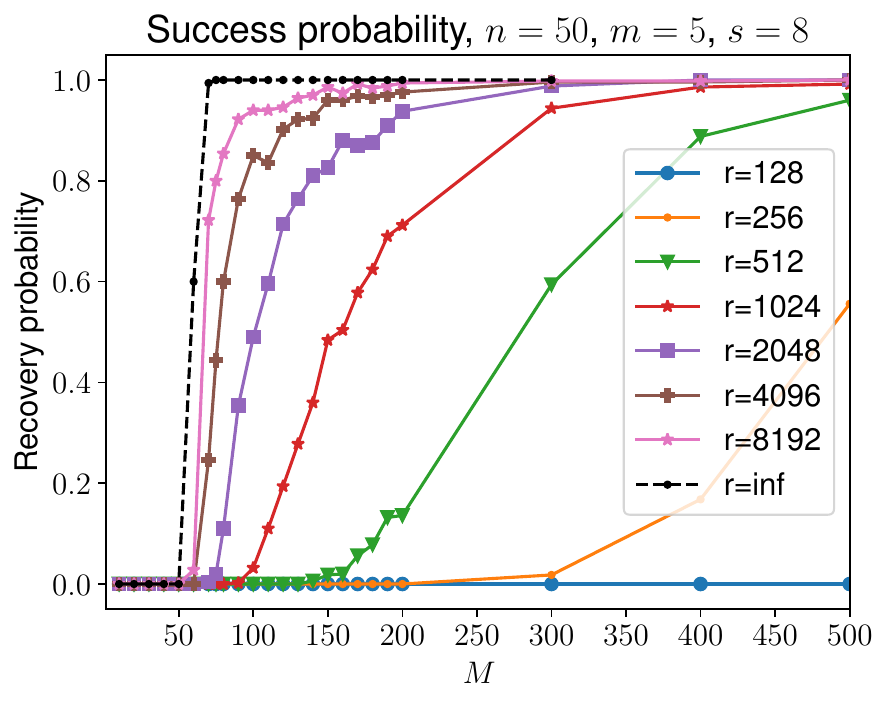}
    \caption{Recovery probabilities for finite number of samples for $n=50,m=5$ and different values of $r$. Top: $s=4$. Bottom: $s=8$.}
    \label{fig:finite_r}
\end{figure}

While the fact that the attack works is important to note, it is equally important to note that it is very expensive. This is even better illustrated by  \Autoref{fig:cost}. Here, we plot the recovery performance against the \emph{total number of samples} $M\cdot r$, where $M$ is the number of distinct plaintexts  and $r$  the number of samples \emph{per $x_k$}). We see that in order to recover the codebook reliably, Eve empirically needs to observe more than a total of $10^5$ samples. This should be compared to the $n=50$ samples she needs to recover $Q$ in the linear case.

\begin{figure}
    \centering
    \includegraphics[width=0.8\linewidth]{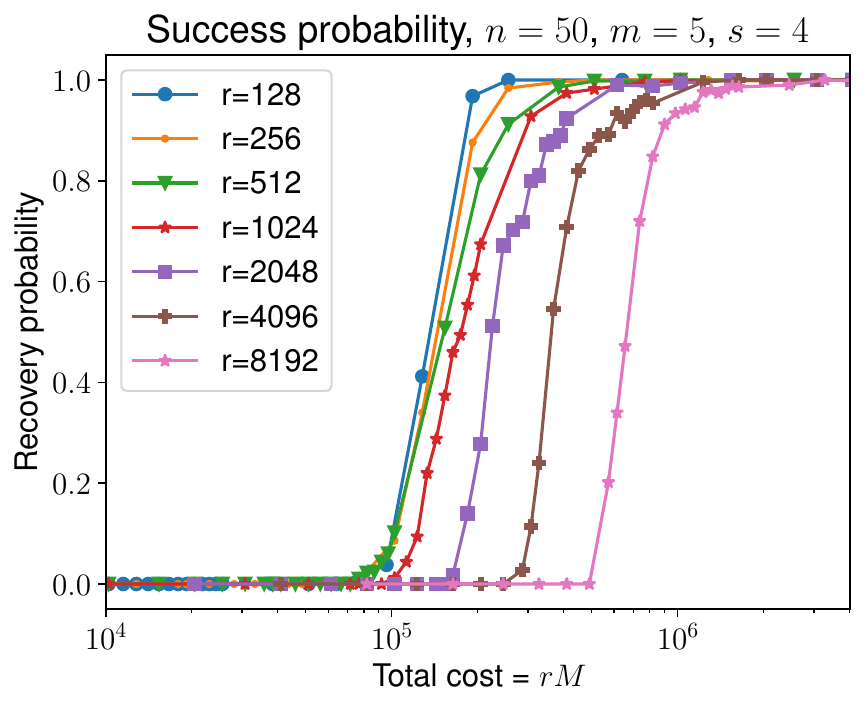}
    \caption{The recovery probability in dependence on the total number of samples $r\cdot M$ for $n=50,m=3$ and $s=4$. $x$-scale is logarithmic.}
    \label{fig:cost}
\end{figure}

\begin{figure}[th]
    \centering
    \includegraphics[width=0.48\textwidth]{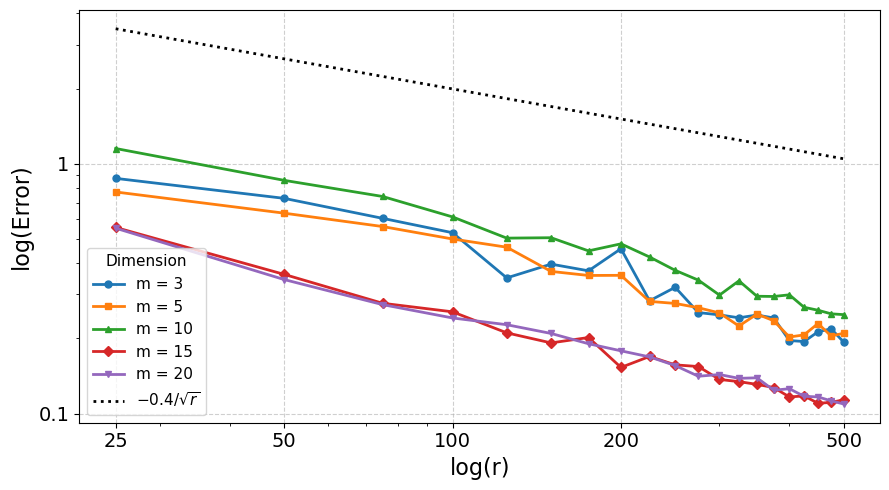}
    \caption{Log-scaled error of estimated $Qx_k$ for varying sample sizes and dimensions.}
    \label{fig:log}
\end{figure}

\section{Discussion}
The experiments reveal a fundamental gap between our theoretical results and the security of BCS in practice. While the injectivity of $\Phi$ is a fundamental source of security of the scheme -- if $\Phi$ is not injective, there are pairs of codebooks $Q$, $Q'$ that cannot be distinguished from an observation of $\Phi([Q])$ -- it is not equivalent to the security of any given codebook. 
For regular, dense $Q$, our experiments instead suggest that a linear cost may suffice. If $Q$ are instead constructed to have sparse rows, the quadratic scaling fit the results better.

Can Alice and Bob use a $Q$ with sparse rows in our pipeline? Not without extra assumptions. Note that when the rows of $Q$ are sparse, $Qx_k=0$ whenever the support of the sparse $x_k$ is disjoint from the support of the rows of $Q$. Bob will have no chance to recover the message $x_k$ from such measurements. However, if all rows in $Q$ have the same support $S$, Alice and Bob can always make sure to send sparse messages that themselves are supported on $S$. They would then effectively obtain a standard blind deconvolution problem of smaller dimension that they are able to solve. The empirical results from above do not directly apply to this modified protocol, at least not when Eve performs a \emph{known plaintext attack}. If she sees the messages Alice sends, they will reveal the set $S$, and Eve's recovery problem will also be reduced to a dense problem with a(n appearant) linear scaling. If she, however, needs to choose the messages herself, i.e., perform a \emph{chosen plaintext attack}, and chooses supports of the messages randomly, \Autoref{fig:recovery_vs_n} gives again a good picture of how her chances of success to recover $Q$ will look like.

One should however note that in any case, our theoretical results and the above discussion always presumes that Eve has access to $\Phi([Q])$. This is a very beneficial for Eve. Indeed, our results reveal that if she applies the covariance-based attack, \Autoref{algo}, she will need a large number of samples to get the same recovery rate as when she is given $\Phi([Q])$ directly. Now, there is of course the possibility that better algorithms than the one we have considered here exist that require fewer samples. Investigating and analyzing such attacks is an interesting direction for future work.

\section{Conclusion}
In this work, we studied the security of Bilinear Compressive Security in the finite-sampling setting. Building upon the previous information-theoretic analysis, we developed the first practical key-recovery attack based on covariance estimation and analyzed its sample complexity. We further investigated the attack experimentally, and showed that, although key recovery from finitely many ciphertext observations is feasible for suitable filter distributions, it requires substantially more observations than the corresponding attack against ordinary linear compressed sensing.

Our experiments also show an interesting gap between the theoretical worst-case identifiability results and the empirical recoverability of particular codebooks. While the theoretical analysis predicts a quadratic dependence on the relevant sparsity parameter, the empirical recovery transition for dense Gaussian codebooks is better described by a linear dependence for the tested parameters. In contrast, the quadratic scaling is only realized when the rows of the codebook $Q$ are sparse.  

Finally, it should be said that the above issue only occurs for the idealized recovery problem, where one assumes that Eve gets access to all $Qx_k \,\mathrm{mod} \, \mathbb{S}$. In practice, these measurements must be estimated from finitely many ciphertext observations. Our covariance-based attack shows that this is possible with the help of covariance estimation under additional suitable assumptions on the filter distribution. However, our experiments indicate that this requires many samples. In future work, we intend to investigate this 'computational security'-aspect more thoroughly by considering more sophisticated algorithms for recovery from finite observations.

\subsection*{Declaration of AI usage}
We used Claude for assistance when writing the code.

\subsection*{Acknowledgement} AF acknowledges support
from the Wallenberg AI, Autonomous Systems and Software Program (WASP) funded by the Knut and
Alice Wallenberg Foundation. HO performed his research during an internship at the department of mathematics and mathematical statistics at Umeå University, and acknowledges support from them. GW and SE were supported by the Federal Ministry for Research, Technology and Space (BMFTR) by the ''EASEPROFIT: Agile and Post-Quantum-Secure Protocols for the Finance Industry'' project, identification number 16KIS2124K, as well as the ''6G-RIC: 6G Research and Innovation Cluster'' project, identification number 16KISK025. GW is also supported by the German Science Foundation (DFG) within priority program SPP 2378: ''ResNets: Resilience in Connected Worlds'' under grant WU 598/12-1.
\bibliographystyle{IEEEtran} 
\bibliography{wdh}

\end{document}